\documentclass[11pt]{article}
\usepackage[letterpaper,margin=1in]{geometry}
\usepackage[T1]{fontenc}
\usepackage{lmodern}
\usepackage{amsmath,amssymb,amsthm,mathtools,booktabs,array}
\usepackage{microtype,graphicx,placeins}
\usepackage[hidelinks]{hyperref}
\usepackage{cite}
\hypersetup{pdftitle={Quantum ordinal selection with two comparisons},pdfauthor={Hongyu Wu}}
\newcolumntype{L}[1]{>{\raggedright\arraybackslash}p{#1}}
\newtheorem{theorem}{Theorem}
\newtheorem{lemma}[theorem]{Lemma}

\newtheorem{corollary}[theorem]{Corollary}
\theoremstyle{definition}

\newcommand{\R}{\mathbb R}
\newcommand{\E}{\mathbb E}
\newcommand{\tr}{\operatorname{tr}}

\newcommand{\diag}{\operatorname{diag}}
\newcommand{\im}{\operatorname{im}}
\newcommand{\Cay}{\operatorname{cay}}
\newcommand{\mathscr}[1]{\mathcal{#1}}
\newcommand{\ind}{\mathbf 1}
\newcommand{\Q}{Q_2}
\newcommand{\PW}{P_W}
\title{Quantum ordinal selection with two comparisons}
\author{Hongyu Wu\\
\small Mike Wiegers Department of Electrical and Computer Engineering\\
\small Kansas State University, Manhattan, KS 66506, USA\\
\small \texttt{hongyuwu@ksu.edu}}
\date{}
\begin{document}
\maketitle

\begin{abstract}
We study the optimal probability of selecting an object within a prescribed rank threshold
using at most two noiseless coherent pairwise comparisons of an unknown strict order.
Polynomial degree bounds alone can overestimate the success attainable with two comparisons.
We develop a comparison-specific semidefinite model that completely characterizes attainable success.
Computer-assisted proofs establish the exact optimum for every rank threshold at seven objects and for minimum selection at eight.
For minimum selection, we construct an explicit algorithm family and determine its exact optimum.
Bounds for all admissible two-comparison algorithms show that the family's asymptotic success
is within 0.016\% of the unrestricted optimum.
We also prove that this relative gap has a strictly positive limit.
The optimal quantum-to-classical success ratio for minimum selection converges to approximately 2.48
under the same two-comparison budget.
\end{abstract}

\section{Introduction}\label{sec:intro}
With a fixed comparison budget, a selection procedure may have to return an object before its rank is known. Ordinal optimization provides a framework for selecting sufficiently good objects with a specified probability~\cite{HoSreenivasVakili1992,HoZhaoJia2007}. A blind pick returns a uniformly random label, so its chance of meeting the rank threshold is the accepted fraction of the population. Pairwise comparisons provide relative-order information that can improve this probability. Coherent comparisons also permit interference between queried pairs. In quantum ordinal optimization (QOO), the question is how quantum access changes selection success under a fixed information budget~\cite{wuqoo}.

The QOO formulation specifies the information model, budget and required output rank and gives an exact solution for one coherent comparison~\cite{wuqoo}. A second comparison lets an algorithm transform the state produced by the first query before querying the order again. The resulting question is how much this extra freedom can improve selection success under a two-comparison budget.

Prior quantum selection results establish query costs under several access models and success requirements. With value queries, a minimum can be found with constant success using a square-root number of queries~\cite{durrhoyer}. Minimum finding with bounded-error comparison answers is used in lexicographically minimal string rotation~\cite{wangying}. Approximate selection permits a specified rank tolerance, as in bounds for medians and other order statistics under value access and in the comparison-tree model~\cite{NayakWu1999}. Approximate value access also permits the construction of sets of smallest objects for every requested set size~\cite{gaojiwang}.

Ordered search instead assumes an already ordered list. Exact invariant algorithms exploit this promise for insertion~\cite{FarhiEtAl1999}. Bounded-error ordered search has a logarithmic query lower bound~\cite{Ambainis1999}, with further comparison-query bounds for searching and sorting~\cite{HoyerNeerbekShi2002}. Work on adversary weights~\cite{ChildsLee2008} and Bayesian updates~\cite{BenOrHassidim2008} also studies this ordered-input problem. This ordered-list promise is absent from selection under an unknown strict order.

Success as a function of query budget is itself an established question. The multiplicative adversary method gives success-dependent lower bounds~\cite{Spalek2008}, with extensions to state generation and small success probabilities~\cite{AmbainisEtAl2011}. Direct-product bounds give exponentially small success when too few queries are available for independent instances of functions on finite alphabets~\cite{LeeRoland2012}. Tight success bounds are known for one-marked-item search up to Grover search's maximum~\cite{Zalka1999}. Success at each query count can also be optimized for coset identification with finite-group oracle representations~\cite{CopelandPommersheim2021}. A comparison oracle has relabelling symmetry, but this alone does not give it the group-representation structure required by that result.

General query methods provide a basis for analyzing these different oracle models. Semidefinite feasibility characterizes standard Boolean-query algorithms~\cite{BarnumSaksSzegedy2003} and finite families of unitary queries~\cite{barnum}. Applications include numerical optimization of small Boolean functions, with exact recovery on at most three bits~\cite{MontanaroJozsaMitchison2015}, and symmetric SDPs for exact translation-invariant ordered search~\cite{ChildsLandahlParrilo2007}. Negative-weight adversaries give query lower bounds at prescribed error~\cite{HoyerLeeSpalek2007}. State-conversion complexity at bounded error is characterized up to constants~\cite{LeeEtAl2011}. These complexity results address query cost. The general semidefinite frameworks provide a starting point, but a comparison-specific analysis is needed to characterize attainable rank-selection probabilities and obtain sharp bounds under a two-query budget.

In the comparison setting, this requires retaining both the strict-order relations and the constraints imposed by the sequence of physical operations. Comparison signs arise from a single strict order and cannot vary independently. In a coherent realization, the same input-independent operation acts between the two queries for every possible order. It preserves inner products, so the states for different orders cannot be chosen independently. A nonnegative rank distribution that sums to one and satisfies the polynomial degree bound need not be realizable by two queries. Quantum strategies and combs express constraints between successive operations~\cite{GutoskiWatrous2007,ChiribellaEtAl2009}. The resulting problem is to determine attainable rank-selection success under both the strict-order relations and these sequential constraints.

To address this problem, we study the largest attainable probability of selecting an object within a prescribed rank threshold. The input is an unknown strict order, and the budget is at most two noiseless coherent comparisons. We specialize the semidefinite description to two comparison queries and use it to obtain exact finite-size results and bounds as the number of objects varies. For minimum selection, we also bound the gap between an explicit algorithm family and the unrestricted optimum. The QOO paper announced the finite-size values and scaled-limit bounds considered below without proof~\cite{wuqoo}. We supply their definitions, proofs and certificates here. Our main contributions are as follows.
\begin{itemize}
\item \textbf{Complete model} (analytical).
For every size at least three, the model characterizes attainable success and recovers a finite ideal algorithm from each feasible point. For sizes at least seven, symmetry reduces the description to a size-independent number of coefficients, although the matrices depend on the size.

\item \textbf{Exact finite-size optima} (computer-assisted).
Matching attainable values and upper bounds certify exact optima for every rank threshold at seven objects and for minimum selection at eight.

\item \textbf{Explicit algorithm family} (computer-assisted).
For minimum selection, we construct an explicit real pair-register family for every size at least seven. We determine its exact minimum-selection optimum and give parameters that attain it.

\item \textbf{Uniform bounds and scaled limit} (computer-assisted).
A separate certificate from the complete model bounds all admissible two-comparison algorithms, so these results quantify the family's distance from unrestricted optimal performance. Together with a subset-transfer argument, the bounds prove that the unrestricted minimum-selection probability, multiplied by population size, converges to a constant near $7.44$. The family's asymptotic success is within $0.016\%$ of this unrestricted optimum. A rational feasible point nevertheless proves that the family's limiting scaled success is strictly smaller.
\end{itemize}

These results also quantify the improvement over classical selection. With two classical comparisons, selecting the best of three uniformly sampled objects is optimal~\cite{wuqoo}. For minimum selection, the ratio of optimal quantum to optimal classical success tends to approximately $2.48$. Two quantum comparisons outperform six classical comparisons exactly when there are at least $46$ objects.

The remainder of this paper is organized as follows. Section~\ref{sec:model} specifies the information interface and success criterion. Section~\ref{sec:feasibility} characterizes feasible two-comparison algorithms. Sections~\ref{sec:exact}--\ref{sec:uniform} give the finite-size optima, the explicit pair-register family and the unrestricted bounds. Section~\ref{sec:interpretation} compares quantum and classical selection, explains the scope of the comparison model and identifies remaining questions. Section~\ref{sec:conclusion} summarizes the main results. The appendices give the detailed certificate constructions and verification arguments.

\section{Model and success criterion}\label{sec:model}
The distinction between a comparison budget and an evaluation budget requires
an explicit interface. Let $[n]=\{1,\ldots,n\}$ label $n$ objects. An input is a
hidden strict order represented by distinct ranks $x_i\in[n]$. Rank one is
best. The algorithm receives no numerical rank or value information. Its
charged operation is the ordinary comparison-XOR oracle
\begin{equation}\label{eq:oracle}
 O_x\lvert i,j,b,\zeta_{\rm ws}\rangle
 =\lvert i,j,b\mathbin\oplus[x_i<x_j],\zeta_{\rm ws}\rangle,
 \qquad i\ne j.
\end{equation}
The address domain is $\{(i,j)\in[n]^2:i\ne j\}$, and $[x_i<x_j]$ is one when $x_i<x_j$ and zero otherwise. Here $b$ is a response bit and $\zeta_{\rm ws}$ is a finite workspace register. Free
operations are ideal and input-independent. The model permits arbitrary
finite workspace, private randomness, finite intermediate instruments, and a
complete final measurement with one output label. Each branch uses at most two
oracle calls. There is no postselection. The complete coefficient model holds for $n\ge3$. Unless an endpoint is
stated explicitly, we use $1\le r<n$. The evaluated certificates and
the uniform results start at seven objects. The signed orbital catalogue
of Appendix~\ref{app:symmetry} lists feature-pair overlap patterns with
their orientation signs and stabilizes at seven labels, so the same
symmetry reduction and uniform certificate apply from that size on.

For an algorithm $A$, let $R_A(x)$ be the absolute rank of its output, and
define
\begin{equation}\label{eq:value}
 \Q(n,r)=\sup_A\min_x\Pr\{R_A(x)\le r\}.
\end{equation}
In the terminology of the QOO formulation~\cite{wuqoo}, $\Q(n,r)$ is the optimal
alignment probability for a selected set of one object, the good-enough set
of the $r$ best objects, and alignment level one.
One fixed, uniformly random permutation of the labels converts any algorithm
with a given uniform-order average success into an algorithm with that success
on every strict order. The permutation is retained through both calls and
inverted at output. Hence the optimum uniform-order average equals
\eqref{eq:value}. This equivalence permits us to optimize an average and then recover a worst-order guarantee by relabelling.

The endpoint values are $\Q(n,0)=0$ and $\Q(n,n)=1$. In the same model, the
classical optimum with two comparisons is
\begin{equation}\label{eq:classical}
 C_2(n,r)=1-\frac{\binom{n-r}{3}}{\binom n3}.
\end{equation}
The minimum of a uniform three-object subset attains this value. For
completeness, a deterministic adaptive transcript partitions the queried
objects into components of the comparison graph. The component containing
the output has at most three objects. Among uniform hidden orders consistent
with that transcript, its absolute rank set is a uniform subset, independently
of its permitted internal order. Its selected order statistic therefore has
top-r probability no larger than that of a uniform triple minimum. Averaging
the transcripts and private randomness proves the upper bound. This is the
two-comparison specialization of the classical theorem in~\cite{wuqoo}. The same argument applies to any integer budget $q<n$. The output component has at most $q+1$ objects, whose absolute rank set is uniform after their internal comparisons are fixed. Its probability of containing the global minimum is at most $(q+1)/n$, which bounds the success of any selected object in it. A tournament on $q+1$ uniformly sampled objects attains this probability with $q$ comparisons. Therefore $C_q(n,1)=(q+1)/n$ for every $q<n$.

\subsection{Phase sectors and charged identity calls}\label{sec:sectors}
We now express the oracle in a form that permits coefficient calculations
without changing the resource unit. For an unordered edge $e=\{a,b\}$ with
$a<b$, define
\begin{equation}\label{eq:signs}
 d_e(x)=\begin{cases}+1,&x_a<x_b,\\-1,&x_a>x_b,\end{cases}
 \qquad d_0(x)=1.
\end{equation}
There are $1+\binom n2$ phase sectors. An ordinary call with address $(b,a)$
and response $\lvert-\rangle$ applies $d_e$. A response
$\lvert+\rangle$ on a fixed distinct pair implements $d_0$, and still costs
one oracle call. Address orientations and sector multiplicities can be
retained in workspace. Conversely, resolving any ordinary call into the
response $+/-$ basis gives these sectors, with known orientation signs
absorbed in free operations.

For early termination, dilate each finite instrument to an isometry and retain its outcome in an orthogonal workspace record. Control later operations on that record without intermediate measurement. Retain stopped outputs and use a charged response-plus call on a fixed distinct pair in each unused slot. A final measurement reproduces the original probabilities. Finite records and dilations therefore give one coherent two-slot protocol with only ordinary charged calls.

\section{Complete two-comparison feasibility}\label{sec:feasibility}
After two queries, the amplitudes have degree at most two in the comparison signs. The corresponding feature vectors satisfy relations on strict orders. We first remove this redundancy, then impose the constraints required to realize the coefficients by a quantum algorithm.

\subsection{Features and the order quotient}
Put $N=\binom n2$ and $\nu_{\triangle}=\binom n3$. The formal real feature spaces have
orthonormal coordinate bases
\begin{equation}\label{eq:features}
 \mathcal F_1=\operatorname{span}\{1,d_e\},\qquad
 \mathcal F_2=\operatorname{span}\{1,d_e,d_ed_f:e<f\}.
\end{equation}
Here $e_0$ is the coordinate vector of the constant feature, and $e_{(ab,ac)}$ is the coordinate vector of the quadratic feature $d_{ab}d_{ac}$. The other linear and quadratic coordinate vectors are defined in the same way. Let $\phi_1(x)$ and $\phi_2(x)$ be their evaluation vectors. Repeated edges
cancel because $d_e^2=1$. For each $a<b<c$, the strict-order identity
\begin{equation}\label{eq:triangle}
 d_{ab}d_{ac}-d_{ab}d_{bc}+d_{ac}d_{bc}=1
\end{equation}
gives the coefficient relation
$\kappa_{abc}=e_{(ab,ac)}-e_{(ab,bc)}+e_{(ac,bc)}-e_0$.

\begin{lemma}[Quadratic evaluation kernel]\label{lem:kernel}
For $n\ge3$, the linear features in $\mathcal F_1$ are independent as functions of strict orders. The complete evaluation kernel in $\mathcal F_2$ is the span of the
$\nu_{\triangle}$ independent vectors $\kappa_{abc}$. If $\mathsf{U}_{\triangle}$ contains these vectors as
columns and $\boldsymbol{v}_{\triangle}=\mathsf{U}_{\triangle}\ind$, the orthogonal projector onto the evaluated quotient is
\begin{equation}\label{eq:projector}
 \Pi_{\rm ord}=I-\frac13\mathsf{U}_{\triangle}\mathsf{U}_{\triangle}^T+\frac{\boldsymbol{v}_{\triangle}\boldsymbol{v}_{\triangle}^T}{3(\nu_{\triangle}+3)},\qquad
 \dim\im \Pi_{\rm ord}=1+N+\binom N2-\nu_{\triangle}.
\end{equation}
In particular, $\Pi_{\rm ord}\phi_2(x)=\phi_2(x)$ on every strict order.
\end{lemma}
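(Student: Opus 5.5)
The plan is to prove the four claims in order: independence of linear features, the Gram structure of the $\kappa_{abc}$, completeness of the kernel, and then the projector formula and its rank. Each $\kappa_{abc}$ lies in the kernel by the identity \eqref{eq:triangle}, which holds for all six relative orders of $a,b,c$. So the real content is that these vectors are independent and that they exhaust the kernel.

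\textbf{Gram matrix and projector.} Each $\kappa_{abc}$ has four nonzero entries $\pm1$, so $\|\kappa_{abc}\|^2=4$. A quadratic coordinate $e_{(e,f)}$ with $e,f$ sharing a vertex determines the triangle $e\cup f$. Hence two distinct triangles share no quadratic coordinate, and $\langle\kappa_{abc},\kappa_{a'b'c'}\rangle=1$, coming only from the $-e_0$ entries. Thus $\mathsf U_\triangle^T\mathsf U_\triangle=3I+J$ with $J=\ind\ind^T$ of size $\nu_\triangle$. This matrix is positive definite, which gives independence of the $\kappa_{abc}$ immediately. Its inverse is $\tfrac13\bigl(I-J/(\nu_\triangle+3)\bigr)$. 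Substituting into $I-\mathsf U_\triangle(\mathsf U_\triangle^T\mathsf U_\triangle)^{-1}\mathsf U_\triangle^T$ and using $\mathsf U_\triangle J\mathsf U_\triangle^T=\boldsymbol v_\triangle\boldsymbol v_\triangle^T$ yields \eqref{eq:projector}. The dimension is $\dim\mathcal F_2-\nu_\triangle=1+N+\binom N2-\nu_\triangle$. The identity $\Pi_{\rm ord}\phi_2(x)=\phi_2(x)$ holds because $\phi_2(x)$ is orthogonal to every kernel vector, since $\langle\kappa_{abc},\phi_2(x)\rangle=0$ is exactly \eqref{eq:triangle}. All of this is conditional on the kernel being exactly $\operatorname{span}\{\kappa_{abc}\}$.

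\textbf{Linear independence and kernel completeness.} Both use one adjacent-swap device. Let $c$ be a coefficient vector whose evaluation $\langle c,\phi(x)\rangle$ vanishes on all strict orders. Choose a pair $a<b$ and an order $x$ in which $a,b$ hold adjacent ranks, and let $x'$ swap them. Only $d_{ab}$ changes sign. Hence $\langle c,\phi(x)\rangle-\langle c,\phi(x')\rangle=2d_{ab}(x)\,g(x)$, where $g$ is $c_{ab}$ plus $\sum_f c_{(ab,f)}d_f$ in the quadratic case.

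Because $a,b$ are adjacent, every third object $c'$ satisfies $d_{ac'}(x)=d_{bc'}(x)$. Therefore $g$ is a linear function of the order obtained by merging $a,b$ into one object among $n-1$ objects. It vanishes on all such merged orders.
\begin{itemize}
\item In the linear case this gives $c_{ab}=0$ directly. The constant then vanishes, so $\mathcal F_1$ is independent.
\item In the quadratic case, apply the just-proved $\mathcal F_1$ independence on the merged set; this requires $n-1\ge2$, which is why $n\ge3$ is assumed. It forces $c_{ab}=0$ and $c_{(ab,f)}=0$ for every $f$ disjoint from $\{a,b\}$. For $f\in\{ac',bc'\}$ it forces the single combination $c_{(ab,ac')}+c_{(ab,bc')}=0$, since these two features merge to the same one.
\end{itemize}

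\textbf{Completing the kernel argument.} First subtract a combination of the $\kappa_{abc}$ to normalize $c$ so that the coordinate $e_{(ab,bc)}$ vanishes for every triangle $a<b<c$. This is possible because each $\kappa_{abc}$ is the only kernel generator touching that coordinate. Then run the swap argument over all pairs. For pair $ab$ with third vertex $c$, the relation $c_{(ab,ac)}+c_{(ab,bc)}=0$, with the normalized $c_{(ab,bc)}=0$, kills $c_{(ab,ac)}$. For pair $ac$ with third vertex $b$, the relation links $c_{(ab,ac)}$ and $c_{(ac,bc)}$, so $c_{(ac,bc)}=0$. After all quadratic and linear coefficients vanish, the constant vanishes too, so $c=0$ after the normalization. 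This proves the original $c$ lies in $\operatorname{span}\{\kappa_{abc}\}$.

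\textbf{Main obstacle.} I expect the delicate step to be the bookkeeping in the merged-order reduction. The orientation conventions in \eqref{eq:signs} with $a<b$ can flip the sign relating $d_{ac'}$ and $d_{bc'}$ when $c'$ lies between $a$ and $b$ in label order. Each of the three triangle cases must therefore be checked against the signs in $\kappa_{abc}$, so that the pairwise relations really chain to zero rather than to a nonzero multiple of $\kappa$.
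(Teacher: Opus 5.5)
Your proof is correct. The Gram-matrix, projector and rank part matches the paper's, but your kernel-completeness argument takes a cleaner route.

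\textbf{Comparison with the paper.} The paper handles $n=3$ separately, by a direct rank computation on the six orders. For $n\ge4$ it uses three separate devices:
\begin{itemize}
\item reversing the whole order, to strip the linear part of a quadratic relation;
\item a double adjacent swap of two disjoint edges, to isolate each disjoint-edge coefficient;
\item moving a third label across an adjacent pair, to obtain $a_\triangle+b_\triangle=0$ and $b_\triangle+c_\triangle=0$ and hence the pattern $(1,-1,1)$, after which multiples of $\kappa_{abc}$ are subtracted.
\end{itemize}
Your key observation replaces all three. The swap derivative $g$ is a degree-one function of the merged $(n-1)$-object order, so $\mathcal F_1$-independence on $n-1$ labels does everything at once. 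The linear coefficient $c_{ab}$ appears as the constant term of $g$, so no reversal is needed. The disjoint-edge coefficients vanish without double swaps. The triangle pair relations come out directly. This amounts to a one-step induction on $n$, and it covers $n=3$ with no separate computation. In exchange, the paper's version displays the sign pattern of $\kappa_{abc}$ explicitly, whereas your normalization step sidesteps it.

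\textbf{A sign correction.} On orders with $a,b$ adjacent, $d_{ac'}=d_{bc'}$ holds only when $c'$ does not lie between $a$ and $b$ in label order. When $a<c'<b$, the orientation convention gives $d_{ac'}=-d_{bc'}$, and the relation becomes $c_{(ab,ac')}-c_{(ab,bc')}=0$. The merged-order reduction is unaffected, since these signs are fixed and $g$ is still a linear function of the merged order.

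\textbf{Why the anticipated obstacle does not arise.} Your completion step uses only that each relation links two coefficients with weights $\pm1$. After zeroing $e_{(ab,bc)}$ with $\kappa_{abc}$, the relation for the pair $ab$ kills $c_{(ab,ac)}$. The relation for the pair $ac$ then kills $c_{(ac,bc)}$. Both conclusions hold whatever the signs. So the relations cannot chain to a nonzero multiple of $\kappa_{abc}$, and no case-by-case sign check is needed.
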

\begin{proof}
For $n=3$, evaluation on the six strict orders has rank four in
$\mathcal F_1$ and rank six in the seven-dimensional $\mathcal F_2$,
whose kernel is the single triangle vector. For $n\ge4$, swapping labels adjacent in rank changes only their comparison sign. Subtraction therefore sets each
coefficient of a linear relation to zero, and then its constant is zero.
Reversing the whole order changes the sign of the linear part of a quadratic
relation and preserves its other parts, so subtraction removes that linear
part by the preceding argument. Now keep an edge's endpoints adjacent and
subtract before and after their swap. For a disjoint edge, place its endpoints
consecutively elsewhere and swap them too. This second subtraction isolates
its quadratic coefficient, which is zero. Only edge pairs in one triangle
remain. For $a<b<c$, write their coefficients as $a_{\triangle}$ for $d_{ab}d_{ac}$, $b_{\triangle}$ for $d_{ab}d_{bc}$ and $c_{\triangle}$ for $d_{ac}d_{bc}$. When $a,b$ are adjacent in rank,
$d_{ac}=d_{bc}$. Moving $c$ from immediately before that pair to immediately
after it in the subtracted relation gives $a_{\triangle}+b_{\triangle}=0$, independently of every
other label. Repeating the argument with $b,c$ adjacent in rank gives $b_{\triangle}+c_{\triangle}=0$.
Thus $(a_{\triangle},b_{\triangle},c_{\triangle})=a_{\triangle}(1,-1,1)$. Subtract these multiples of
\eqref{eq:triangle} for every triangle. All nonconstant terms vanish,
and evaluation sets the remaining constant to zero. Distinct triangles have disjoint quadratic
supports, so these relations are independent. Their Gram matrix is
$\mathsf{U}_{\triangle}^T\mathsf{U}_{\triangle}=3I+\ind\ind^T$. Inverting this matrix gives
\eqref{eq:projector}.
\end{proof}

Lemma~\ref{lem:kernel} can also be derived from the Varchenko--Gelfand presentation of the real functions on
the chambers of the braid arrangement~\cite{VarchenkoGelfand1987}, together with the associated graded
presentation in Proposition~2.6 of Yagi and Yoshinaga~\cite{YagiYoshinaga2025}. The chambers are the strict
orders, the generators are the comparison indicators $(1+d_{ab})/2$, and the relation for each triple of
labels corresponds to \eqref{eq:triangle}. In degree at most two, the square and triple relations span the
relation space.

The quotient $\mathcal R=\im\Pi_{\rm ord}$ evaluates injectively. We equip it with the inner product inherited from $\mathcal F_2$. In this metric, the projected constant has squared norm $\|\Pi_{\rm ord}e_0\|^2=3/(\nu_{\triangle}+3)$. Let $J_0$ include $\mathcal F_1$ in $\mathcal F_2$, and $J_e$ multiply by $d_e$ and cancel repeated edges. These coordinate isometries give
\begin{equation}\label{eq:maps}
 \mathcal B_\ell=\Pi_{\rm ord}J_\ell,
 \qquad \mathcal B_\ell^T\phi_2(x)=d_\ell(x)\phi_1(x).
\end{equation}
These maps describe the second query on the supported quotient.

\subsection{The two causal equalities}\label{sec:causal}
For an output label $i$, define the exact objective matrix
\begin{equation}\label{eq:objectives}
 F_i^{(r)}=\E_x\!\left[
 \ind_{\{x_i\le r\}}\phi_2(x)\phi_2(x)^T\right],
\end{equation}
where $x$ is uniform over all $n!$ orders. The next theorem specializes finite-query semidefinite feasibility and recovery~\cite{BarnumSaksSzegedy2003,barnum} to the comparison-sign quotient. Its causal equalities express successive normalization, as in strategies and combs~\cite{GutoskiWatrous2007,ChiribellaEtAl2009}.

\begin{theorem}[Complete coefficient model]\label{thm:model}
For $n\ge3$ and $0\le r\le n$, $\Q(n,r)$ is the value of the semidefinite
program
\begin{equation}\label{eq:primal}
\begin{aligned}
 \max\quad &\sum_{i=1}^n\langle F_i^{(r)},\mathcal Y_i\rangle,\\
 \text{subject to}\quad
 &w_\ell\ge0,\quad\sum_\ell w_\ell=1,\quad \mathcal X_\ell\succeq0,\quad \mathcal Y_i\succeq0,\\
 &\sum_\ell\mathcal X_\ell=\diag(w),\\
 &\sum_i\mathcal Y_i=\sum_\ell\mathcal B_\ell\mathcal X_\ell\mathcal B_\ell^T,\qquad \Pi_{\rm ord}\mathcal Y_i\Pi_{\rm ord}=\mathcal Y_i.
\end{aligned}
\end{equation}
The $\mathcal X_\ell$ act on $\mathcal F_1$, the $\mathcal Y_i$ act on $\mathcal F_2$, and
$\langle A,B\rangle=\tr(A^TB)$. Every feasible point has a finite ideal
ordinary-XOR realization with the prescribed output probabilities
$\phi_2(x)^T\mathcal Y_i\phi_2(x)$. Conversely, every protocol in
Section~\ref{sec:model} has a feasible point with the same uniform-order
objective. Real matrices suffice even when complex protocols are allowed.
\end{theorem}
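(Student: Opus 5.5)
The proof has two directions, soundness (protocol $\Rightarrow$ feasible point) and completeness (feasible point $\Rightarrow$ protocol), followed by a realification step. Both directions use the Gram-matrix description of query algorithms from~\cite{BarnumSaksSzegedy2003,barnum}, specialized to the phase sectors of Section~\ref{sec:sectors}. By that section, every protocol can be put in a canonical form, and we work there throughout. The form is: an initial state $\lvert\psi_0\rangle$, then a sector-controlled phase call $\sum_\ell d_\ell(x)\lvert\ell\rangle\langle\ell\rvert\otimes I$, then a fixed unitary $U$, then a second sector call, and finally a projective measurement $\{P_i\}$. Early stopping and instruments have already been absorbed by dilation and response-plus padding.

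\textbf{Soundness.} Write $\lvert\psi_0\rangle=\sum_\ell\lvert\ell\rangle\lvert u_\ell\rangle$. After the first call, the state is $\sum_\ell d_\ell(x)\lvert\ell\rangle\lvert u_\ell\rangle$, which is linear in $\phi_1(x)$. Set $(\mathcal X_\ell)_{ee'}=\langle u_\ell$-components$\rangle$. More precisely, expand $U$ applied to the post-call state in the second-call sector basis as $\sum_\ell\lvert\ell\rangle\lvert v_\ell(x)\rangle$ with $\lvert v_\ell(x)\rangle=\sum_{e}d_e(x)\lvert v_{\ell e}\rangle$. Then let $\mathcal X_\ell$ be the Gram matrix $[\langle v_{\ell e}|v_{\ell e'}\rangle]$ on $\mathcal F_1$. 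Because $U$ is input-independent and unitary, summing over $\ell$ recovers the Gram matrix of the first-call vectors $\lvert u_e\rangle$. That matrix is $\diag(w)$ with $w_e=\|u_e\|^2$, and the normalization gives $\sum w=1$; this is the first causal equality. After the second call, the amplitude vector is $\sum_\ell d_\ell(x)\lvert\ell\rangle\lvert v_\ell(x)\rangle$, whose coordinates in $\mathcal F_2$ come from the maps $J_\ell$. For each outcome $i$, let $\tilde{\mathcal Y}_i$ be the Gram matrix of $P_i$ applied to these coefficient vectors. By completeness of the measurement, $\sum_i\tilde{\mathcal Y}_i=\sum_\ell J_\ell\mathcal X_\ell J_\ell^T$. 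Finally, replace $\tilde{\mathcal Y}_i$ by $\Pi_{\rm ord}\tilde{\mathcal Y}_i\Pi_{\rm ord}$. By Lemma~\ref{lem:kernel}, $\Pi_{\rm ord}\phi_2(x)=\phi_2(x)$, so the output probabilities and the objective $\sum_i\langle F_i^{(r)},\mathcal Y_i\rangle$ are unchanged. This yields the second equality with $\mathcal B_\ell$. The uniform-average objective equals $\Q$ by the relabelling argument of Section~\ref{sec:model}.

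\textbf{Completeness.} Start from a feasible point. Take Gram factorizations $\mathcal X_\ell=V_\ell^TV_\ell$, so that $\sum_\ell V_\ell^TV_\ell=\diag(w)$. The columns of the stacked matrix $V=(V_\ell)_\ell$ are then orthogonal with norms $\sqrt{w_e}$. The initial state is $\sum_e\sqrt{w_e}\lvert e\rangle\lvert 0\rangle$, and after the first call it is $\sum_e d_e(x)\sqrt{w_e}\lvert e\rangle$. Define $U$ on the span of the $\lvert e\rangle$ (with $w_e>0$) by $\lvert e\rangle\mapsto w_e^{-1/2}\sum_\ell\lvert\ell\rangle V_\ell\lvert e\rangle$. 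This map is an isometry, and we extend it to a unitary on a finite space. After the second call, the state's coefficient vector in $\mathcal F_2$ is $G\phi_2(x)$, where $G^TG=\sum_\ell\mathcal B_\ell\mathcal X_\ell\mathcal B_\ell^T$ once we restrict to the quotient. Here we use $\mathcal B_\ell^T\phi_2=d_\ell\phi_1$ from \eqref{eq:maps}. Factor $\mathcal Y_i=W_i^TW_i$. Since $\sum_i W_i^TW_i=G^TG$, the map $G z\mapsto\bigoplus_i W_i z$ is a well-defined isometry $T$ from $\im G$ into $\bigoplus_i\mathbb R^{k_i}$. Apply a unitary dilation of $T$ and then measure the block index $i$. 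The probability of outcome $i$ is $\|W_i\phi_2(x)\|^2=\phi_2(x)^T\mathcal Y_i\phi_2(x)$. These probabilities sum to $\|G\phi_2(x)\|^2=\sum_\ell d_\ell^2\phi_1^T\mathcal X_\ell\phi_1=\phi_1^T\diag(w)\phi_1=\sum w=1$, using $d_e^2=1$. Translating the sector calls back into ordinary XOR calls uses Section~\ref{sec:sectors}.

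\textbf{Real matrices and the main obstacle.} If a protocol is complex, its Gram matrices are Hermitian. The evaluation vectors $\phi$, the maps $\mathcal B_\ell$ and the objectives $F_i^{(r)}$ are all real, so replacing every matrix by its real part preserves feasibility, positivity and the objective. I expect the main obstacle to be the soundness step with $\Pi_{\rm ord}$. The issue is that $\sum_\ell J_\ell\mathcal X_\ell J_\ell^T$ lives on all of $\mathcal F_2$, whereas the constraint is stated with $\mathcal B_\ell=\Pi_{\rm ord}J_\ell$. I would handle this by working from the outset with evaluation-level Gram data, where only the values on strict orders matter. Because Lemma~\ref{lem:kernel} gives injectivity of evaluation on $\mathcal R$, both sides of each identity are determined by their quadratic forms on the vectors $\phi_2(x)$. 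Indeed, two PSD matrices supported on $\mathcal R$ that agree on all $\phi_2(x)\phi_2(x)^T$ are equal, since the $\phi_2(x)$ span $\mathcal R$. Hence projecting both sides by $\Pi_{\rm ord}$ preserves the equality.
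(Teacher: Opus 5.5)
Your proposal is correct and follows essentially the paper's own proof. Soundness uses sector-wise coefficient Grams, with the input-independent isometry giving $\sum_\ell\mathcal X_\ell=\diag(w)$ and completeness of the final measurement giving the second equality. Recovery uses Gram factorizations and the well-defined isometry $Gz\mapsto\bigoplus_iW_iz$, and the realness claim follows by taking real parts.

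The ``main obstacle'' in your final paragraph is not an obstacle, and the lemma you offer for it is false. Conjugating $\sum_i\tilde{\mathcal Y}_i=\sum_\ell J_\ell\mathcal X_\ell J_\ell^T$ by the symmetric $\Pi_{\rm ord}$ gives $\sum_\ell\mathcal B_\ell\mathcal X_\ell\mathcal B_\ell^T$ directly, because $\mathcal B_\ell=\Pi_{\rm ord}J_\ell$. The lemma fails because agreement of the diagonal values $\phi_2(x)^TA\phi_2(x)$ over all orders does not determine a symmetric $A$ on $\mathcal R$. Already at $n=7$, symmetric matrices on the $197$-dimensional $\mathcal R$ form a space of dimension far larger than the $5040$ orders; determining $A$ would need the bilinear values $\phi_2(x)^TA\phi_2(y)$. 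Delete that justification.
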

\begin{proof}
Partition the prepared state by its first-query phase sector. Its coefficient
Gram matrix is $\diag(w)$. Immediately before the second query, partition
the state by the second-query sectors. Their coefficient Gram matrices
$\mathcal X_\ell$ sum to $\diag(w)$ because the intervening operation is an isometry.
After the second query, the sector coefficient maps are multiplied by $\mathcal B_\ell$.
Projection by $\Pi_{\rm ord}$ does not change any evaluated state. A complete final
measurement gives supported positive coefficient Grams $\mathcal Y_i$ whose sum is
the full postquery Gram matrix. This proves both equalities in
\eqref{eq:primal}. Taking real parts of Hermitian positive matrices
preserves positivity, the real equalities, and the objectives.

For the converse, factor $\mathcal X_\ell=\mathsf L_\ell^T\mathsf L_\ell$ and stack the factors. The first
equality gives equal Grams for the weighted first-query feature columns and
the stacked factors. Therefore one input-independent isometry maps the first
state family to the second. After the second query, stack $\mathsf L_\ell\mathcal B_\ell^T$ into
$\mathsf{C}_{\rm qry}$. Factor and stack the $\mathcal Y_i$ into a matrix $\mathsf{C}_{\rm out}$. The second equality
gives $\mathsf{C}_{\rm qry}^T\mathsf{C}_{\rm qry}=\mathsf{C}_{\rm out}^T\mathsf{C}_{\rm out}$, so $\mathsf{C}_{\rm qry}u\mapsto \mathsf{C}_{\rm out}u$ is a well-defined isometry on
$\im \mathsf{C}_{\rm qry}$. Extend both isometries to finite unitaries with finite ancillas.
Measure the orthogonal output-label summands and assign unused directions to
one label. The resulting measurement is complete. Section~\ref{sec:sectors}
implements each phase slot by one charged ordinary call. Relabelling converts
the average objective into the same success on every order. This specializes
the finite-query Gram recovery method of~\cite{BarnumSaksSzegedy2003,barnum}.
\end{proof}

The feasible set in \eqref{eq:primal} is bounded. Indeed,
$\sum_\ell\tr \mathcal X_\ell=1$, and $\|\mathcal B_\ell\|\le1$ bounds the trace of
$\sum_i\mathcal Y_i$. It is also closed. Thus an optimum exists for each finite
$n,r$, and the recovery argument gives a finite ideal algorithm that attains the optimum. 

A useful sufficient dual consists of a scalar objective $\theta$, a real
symmetric $H$ on $\mathcal F_1$ and a supported real symmetric $Z$
on $\mathcal R$, with
\begin{equation}\label{eq:dual}
 \theta \ge H_{\ell\ell},\qquad H-\mathcal B_\ell^TZ\mathcal B_\ell\succeq0,\qquad
 Z-F_i^{(r)}\succeq0\quad\text{on }\mathcal R.
\end{equation}
The identity
\begin{equation}\label{eq:gap}
\begin{aligned}
 \theta -\sum_i\langle F_i^{(r)},\mathcal Y_i\rangle
 ={}&\sum_\ell w_\ell(\theta -H_{\ell\ell})\\
 &+\sum_\ell\langle H-\mathcal B_\ell^TZ\mathcal B_\ell,\mathcal X_\ell\rangle
 +\sum_i\langle Z-F_i^{(r)},\mathcal Y_i\rangle
\end{aligned}
\end{equation}
proves weak duality directly. A feasible primal and dual with zero gap therefore determine the optimum.

The causal equalities are essential to this characterization. Nonnegativity, normalization and the rank-degree bound together permit rank laws that no two-comparison algorithm attains. The distinction between a polynomial degree bound and algorithmic realizability also appears for Boolean queries. Block-multilinear approximation need not determine bounded-error query complexity~\cite{AaronsonEtAl2016}, and the exact characterization by completely bounded forms imposes a norm constraint in addition to degree~\cite{ArunachalamEtAl2019}. Gribling and Laurent formulate the tensor completely bounded norm as a semidefinite program and obtain a corresponding characterization for Boolean query complexity~\cite{GriblingLaurent2019}.

For comparison ranks, the normalized polynomial $p(k)\propto(k-3.4)^2(k-6.5)^2$ on $k=1,\ldots,7$ is nonnegative and has degree four, as the degree bound of~\cite{wuqoo} permits, but $p(1)\approx0.761$ exceeds the seven-object optimum of Theorem~\ref{thm:exact}, about $0.7166$. It is therefore not the rank law of any two-comparison algorithm. The exact linear program over all nonnegative normalized degree-four laws on seven ranks has optimum $p(1)=45/56\approx0.80357$, also above the certified seven-object optimum of about $0.7166$.

\subsection{A fixed-size symmetry description}\label{sec:symmetry}
Relabelling reduces the growing feature matrices to four signed orbital algebras, whose bases index feature-pair orbits with orientation signs. Related coding bounds use Terwilliger algebras~\cite{Schrijver2005}. Group averaging preserves SDP feasibility and objective~\cite{gatermannparrilo}, and invariant algebras reduce positivity to smaller blocks~\cite{BachocEtAl2012}. Here Table~\ref{tab:algebras} lists the identity-query, fixed-edge, global quadratic and fixed-output spaces, with both query layers retained. If a relabelling fixes a feature pair but reverses its sign, the corresponding invariant matrix entry is zero. The remaining support patterns give these dimensions for every $n\ge7$.

\begin{table}[tb]
\centering
\caption{Signed coefficient spaces}\label{tab:algebras}
\begin{tabular}{lrr}\toprule
Algebra & Dimension & Self-adjoint dimension\\\midrule
$\operatorname{End}_{S_n}(\mathcal F_1)$ &3&3\\
$\operatorname{End}_{S_2\times S_{n-2}}(\mathcal F_1)$ &14&11\\
$\operatorname{End}_{S_n}(\mathcal F_2)$ &34&25\\
$\operatorname{End}_{S_{n-1}}(\mathcal F_2)$ &166&99\\\bottomrule
\end{tabular}
\end{table}

For signed orbital basis $A_j$, use the physical trace weights $D_j=\tr(A_j^TA_j)>0$. Left multiplication by a self-adjoint $S=\sum_j s_jA_j$, paired with this metric, preserves positivity~\cite{deklerk},
\begin{equation}\label{eq:regular}
 \mathcal K_A(s)_{pq}=\tr(A_p^TSA_q),\qquad
 S\succeq0\ \Longleftrightarrow\ \mathcal K_A(s)\succeq0.
\end{equation}
The forward implication evaluates $\tr(\mathsf M_{\rm test}^TS\mathsf M_{\rm test})$ for any algebra element $\mathsf M_{\rm test}$. Conversely, a negative spectral projector is a polynomial in $S$ in the same algebra and makes that form negative. The regular matrix is determined by the coefficients $s_j$.

The reduced flow equalities have three and 34 coordinates. Supported self-adjoint output and global spaces have dimensions 63 and 16. Appendix~\ref{app:symmetry} gives their weighted maps and top-r objectives. The coefficient count is fixed, but products, support, trace weights and positivity depend on $n$, so evaluating $\Q(n,r)$ remains a separate step.

\section{Exact finite-size optima}\label{sec:exact}
The complete model permits separate optimization for each rank threshold.
We now evaluate all thresholds at size seven and minimum selection at size
eight. The values are algebraic, and their definitions consist of rational
polynomial systems with isolating boxes. 

\begin{theorem}[Exact seven-object and eight-object values]\label{thm:exact}
In the model of Section~\ref{sec:model} (computer-assisted proof, Appendix~\ref{app:exact}),
\begin{equation}\label{eq:exactvalues}
 \Q(7,r)=\vartheta_r\quad(1\le r\le4),\qquad
 \Q(7,r)=1\quad(5\le r\le7),\qquad
 \Q(8,1)=\vartheta_8.
\end{equation}
The exact numbers $\vartheta_1,\vartheta_2,\vartheta_3,\vartheta_4,\vartheta_8$ are defined in
Appendix~\ref{app:exact}. Each has a matching primal and dual satisfying
\eqref{eq:primal}--\eqref{eq:dual}, and a finite ideal algorithm attains it
with two charged ordinary comparisons.
\end{theorem}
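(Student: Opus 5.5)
The proof will rest on Theorem~\ref{thm:model} together with the weak-duality identity \eqref{eq:gap}. For each pair $(n,r)\in\{(7,1),\dots,(7,4),(8,1)\}$ I would exhibit a primal feasible point of \eqref{eq:primal} and a dual point satisfying \eqref{eq:dual} whose values coincide. Theorem~\ref{thm:model} then identifies the common value with $\Q(n,r)$ and supplies a finite ideal algorithm using two charged calls. To keep the certificates small, I would work throughout in the signed orbital coordinates of Section~\ref{sec:symmetry}. Group averaging preserves feasibility and objective, so both primal and dual may be taken $S_n$- or $S_{n-1}$-invariant, as appropriate. Each variable is then a vector in the spaces of Table~\ref{tab:algebras}: three coordinates for $w$ and the first flow, 34 for the second flow, 63 for the outputs and 16 for the global space. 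Positivity is checked through the regular representation \eqref{eq:regular}. By the invariance, it suffices to specify a single output matrix $\mathcal Y_1$ in the $S_{n-1}$ algebra and obtain the others by relabelling.

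The thresholds $5\le r\le 7$ at seven objects need only a primal point, since $\Q\le1$ trivially. One option is to exhibit an algorithm directly. The cleaner option is to give a rational feasible point with objective exactly $1$, for instance one whose output coefficients make $\phi_2(x)^T\mathcal Y_i\phi_2(x)$ vanish whenever $x_i>r$, i.e.\ a two-query strategy that never outputs an object of rank $\ge6$ when $r=5$. I would first look for this numerically and then rationalize it.

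For the remaining five values I would solve the reduced SDP numerically to high precision and read off the optimal support. That means identifying which blocks of $\mathcal K_A$ are rank-deficient, which dual slacks $\theta-H_{\ell\ell}$ vanish, and the complementary-slackness pattern implied by \eqref{eq:gap}. From these I would write the KKT conditions, restricted to the identified ranks, as a rational polynomial system: stationarity, the two causal equalities, and vanishing determinants or kernel conditions on the singular blocks. Gröbner or resultant elimination would then give the minimal polynomial of $\vartheta_r$, and interval arithmetic would produce an isolating box. The value $\vartheta_r$ is defined as the unique root in that box.

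I expect the main obstacle to be the exact verification of positive semidefiniteness for the algebraic certificates. A numerical optimum only gives approximate PSD matrices with zero eigenvalues, and rounding them destroys feasibility. My first attempt would exploit the face structure: parametrize each singular block as $\mathsf L^T\mathsf L$, with $\mathsf L$ of the detected rank and entries in $\mathbb Q(\vartheta)$, so that the kernel conditions hold by construction. Positive definiteness on the complementary subspace would then be checked by exact leading-minor signs, evaluated with interval arithmetic on the isolating box. The equality constraints would be verified symbolically modulo the minimal polynomial. If exact algebraic dual points turn out to be too unwieldy, the fallback is a rational dual with $\theta$ slightly above $\vartheta_r$ paired with a rational primal slightly below it. This certifies $\vartheta_r$ only to within an interval, so it establishes the numerical value but not the exact identity. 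The exact claim would still require the algebraic version, at least for the dual.
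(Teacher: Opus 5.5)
Your architecture matches the paper's proof: invariant primal and dual points, a face and ranks read off from a numerical optimum, a rational KKT and kernel system, interval isolation, singular slacks in kernel form with positive leading blocks checked on a box, zero gap in \eqref{eq:gap}, and recovery through Theorem~\ref{thm:model}. The step that may fail is your exactness mechanism.

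You plan to eliminate down to a minimal polynomial of $\vartheta_r$, take the singular factors with entries in $\mathbb Q(\vartheta_r)$, and check the equalities modulo that polynomial. Nothing guarantees that such a certificate exists. The optimal face is a spectrahedron defined over $\mathbb Q(\vartheta_r)$, and such a set can be nonempty yet contain no point with coordinates in that field. The coordinates of an optimal primal--dual pair may generate a larger number field than the value does. Elimination with 33 to 38 unknowns is also far from routine, and the paper states that the five minimal-polynomial degrees have not been computed.

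The paper instead defines each $\vartheta$ as a rational linear functional of the unique zero of a square multivariate system. Its unknowns are face coordinates, kernel entries $K_b$ and multiplier entries $\Lambda_b$, and a Krawczyk test isolates the zero in a box of radius $10^{-50}$. Kernel identities and stationarity hold exactly at that zero because they are equations of the system; stationarity is the primal flow condition for $Y_b\Lambda_bY_b^T$. Only strict inequalities are checked on the box: positive leading blocks, positive multipliers and inactive blocks. If you keep elimination, you need a primitive element for the whole root, not $\vartheta_r$.

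Your plan also does not address degeneracy. Here the full KKT system has positive-dimensional solution sets. Six query pencils vanish on the chosen face, which leaves thirteen multiplier entries, eight of them freely fixed. At $n=8$, the weight $w_0$ and two gauges are free choices inside the optimal face. Such a system has no isolated nonsingular zero to enclose, and naive elimination suffers as well. Your $\mathsf L^T\mathsf L$ parametrization adds a further orthogonal gauge $\mathsf L\mapsto Q\mathsf L$. The kernel form \eqref{eq:kernelblocks}--\eqref{eq:singularpsd} avoids that gauge.

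The paper restricts the dual to an $18\times13$ face basis $\mathsf N$ with fixed gauges, which gives a square nonsingular system. It then restores the dropped stationarity coordinates by the positive normal lift \eqref{eq:stationarityrestore}. For $r=3,4$ it also adds the factor equations \eqref{eq:factorrepair}, which come from variations lost when a coordinate is fixed.

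Finally, $r\ge5$ needs no numerical search. By \eqref{eq:classical}, $C_2(7,r)=1$ because $\binom{7-r}{3}=0$. The classical best-of-three tree is an admitted protocol and already attains $1$.
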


Table~\ref{tab:values} reports the rounded values and their classical
counterparts from \eqref{eq:classical}. Each nontrivial top-r entry has its
own certificate. 

\begin{table}[tb]
\centering
\caption{Optimal selection probabilities}\label{tab:values}
\begin{tabular}{ccrr}\toprule
Objects $n$ & Accepted ranks $r$ & Quantum $\Q(n,r)$ & Classical $C_2(n,r)$\\\midrule
7&1&0.716632459&$3/7$\\
7&2&0.922214198&$5/7$\\
7&3&0.980507441&$31/35$\\
7&4&0.996592752&$34/35$\\
7&5,6,7&1&1\\
8&1&0.658596828&$3/8$\\\bottomrule
\end{tabular}
\end{table}

\subsection{How the certificates establish optimality}
Each certificate pairs a feasible algorithm with a dual upper bound of the same value. It specifies affine dual pencils and an active face through an ordered rational polynomial system. It includes a rational center, a preconditioner and a closed box of radius $10^{-50}$. Exact interval contraction proves one nonsingular algebraic zero there. Its active blocks have the form
\begin{equation}\label{eq:kernelblocks}
 S_b=\begin{pmatrix}A_b&B_b\\B_b^T&C_b\end{pmatrix},\qquad
 A_bK_b+B_b=0,\qquad C_b+B_b^TK_b=0.
\end{equation}
Positive leading blocks throughout the box and the root kernel identities give
\begin{equation}\label{eq:singularpsd}
 S_b=\begin{pmatrix}I\\-K_b^T\end{pmatrix}
 A_b\begin{pmatrix}I&-K_b\end{pmatrix}\succeq0.
\end{equation}
Inactive blocks and supported multiplier matrices are positive throughout the same $10^{-50}$ box. Exact congruences establish positivity of the full dual matrices. We construct a primal from the positive kernel multipliers. The restored stationarity equations ensure that both causal equalities hold. Complementarity makes \eqref{eq:gap} zero. Appendix~\ref{app:exact} gives the root functionals and finite recovery. The isolated root selects one matching pair, without classifying all optimal algorithms or dual points.

\subsection{The attained rank law}
The rank law also shows how a minimum-optimal algorithm performs at other rank thresholds. For one such seven-object algorithm, the law is approximately
\begin{equation}\label{eq:ranklaw}
\begin{aligned}
 (&0.716632459,\ 0.182916624,\ 0.025273689,\ 0.015956247,\\
  &0.022465132,\ 0.007549421,\ 0.029206428).
\end{aligned}
\end{equation}
Its exact probabilities are nonnegative, normalized and have zero fifth and sixth differences, as the two-comparison degree bound requires~\cite{wuqoo}. The amplitude-degree argument is the polynomial method~\cite{BealsEtAl2001}. Rank seven is more probable than rank six, so the law is not monotone. Its top-2 success is approximately $0.899549083$, below $\vartheta_2\approx0.922214198$.

\section{A pair-register family for arbitrary size}\label{sec:family}
To obtain a minimum-selection guarantee for every $n\ge7$, we construct a family of two-comparison algorithms on a real pair register. Its free operations commute with relabelling and use no further workspace. Optimizing within this family gives a lower bound on the unrestricted optimum.

\subsection{Physical operations}
We represent a real pair-register state by a matrix $A\in\R^{n\times n}$ with the Frobenius norm. The initial state has one amplitude on diagonal pairs and another on off-diagonal pairs. The algorithm applies a comparison, a free operation $U$, a second comparison and a free operation $V$. We then measure the first label. Both free operations are orthogonal and commute with relabelling.

Under relabelling, $\R^{n\times n}=2(n)\oplus3(n-1,1)\oplus(n-2,2)\oplus(n-2,1,1)$ for $n\ge4$, with partitions denoting real irreducibles. The last two multiplicities are one, so the orthogonal commutant is $O(2)\times O(3)\times\{\pm1\}^2$ in orthonormal multiplicity coordinates. Both free operations $U,V$ lie in this group. The sixteen Cayley sign charts cover the commutant's success probabilities up to closure, as proved in Appendix~\ref{app:branches}.

Write $I$ and $J$ for the identity and all-ones matrices. The simultaneous row-column permutation action has the orthogonal decomposition
\begin{equation}\label{eq:decomp}
 A=aI+b(J-I)+\diag(z_0)+S(z_1)+T(z_2)+R_s+R_a,
\end{equation}
where each $z_j$ has zero sum, $S(z)_{ij}=z_i+z_j$ off the diagonal with
zero diagonal, and $T(z)_{ij}=z_i-z_j$. The symmetric and antisymmetric
residuals $R_s,R_a$ have zero diagonal and zero row and column sums.
The multiplicity metrics are
\begin{equation}\label{eq:metrics}
 G_2=\diag(1,n-1),\qquad G_3=\diag(1,2(n-2),2n).
\end{equation}
Thus the scalar norm is $n(a^2+(n-1)b^2)$ and the standard-component norm
is $\|z_0\|^2+2(n-2)\|z_1\|^2+2n\|z_2\|^2$.

For a real skew matrix $\Omega $ and positive diagonal $G$, define
\begin{equation}\label{eq:cayley}
 \Cay(G,\Omega )=(I-G^{-1}\Omega )^{-1}(I+G^{-1}\Omega ).
\end{equation}
The $G_2$ block has one upper skew entry, denoted $u_2$ or $v_2$ for $U$ or $V$. For the $G_3$ block, denote the upper skew entries by
$(u_{01},u_{02},u_{12})$ for $U$ and $(v_{01},v_{02},v_{12})$
for $V$, in the coordinate order $(z_0,z_1,z_2)$. An operation $\mathcal W_n$ applies the scalar Cayley block to $(a,b)$ and $\Cay(G_3,\Omega_3)\diag(1,1,\eta)$ to the standard components, with $\eta$ acting first. It preserves $R_s$ and multiplies $R_a$ by $\sigma$, where $\eta,\sigma\in\{\pm1\}$. These operations are orthogonal and commute with relabelling.

Define the effective phase query by
\begin{equation}\label{eq:pairphase}
 (\mathcal D_x A)_{ij}=
 \begin{cases}A_{ii},&i=j,\\
 \operatorname{sgn}(x_i-x_j)A_{ij},&i\ne j.
 \end{cases}
\end{equation}
For $e=\{i,j\}$ the off-diagonal phase is $-d_e$ for $i<j$ and $d_e$ for $i>j$. Retain both labels in workspace. Use response-minus at $(i,j)$ off the diagonal and response-plus on a fixed distinct pair on the diagonal. This isometric encoding extends to a finite unitary, and its inverse clears the ancillas. Thus \eqref{eq:pairphase} uses one ordinary call, also on superpositions of diagonal and off-diagonal pairs.

From a normalized state proportional to $aI+b(J-I)$, apply
\begin{equation}\label{eq:Wcircuit}
 V\mathcal D_xU\mathcal D_x,
\end{equation}
and measure the first label. The first query supplies comparison signs, $U$ redistributes amplitudes, and the second query and $V$ convert their interference into output probabilities. Define $\PW(n)$ as the supremum of the probability of selecting the best object over finite real $\mathcal W_n$ parameters and the sixteen tuples $(\eta_U,\sigma_U,\eta_V,\sigma_V)$, one per sign sheet. The real preparation amplitudes $a,b$ also vary in this supremum. Covariance makes success constant over orders, and $\PW(n)\le\Q(n,1)$. Appendix~\ref{app:branches} proves chart coverage and gives numerical circuit parameters.

After the first query and $U$, the scalar part is $\alpha I+\beta(J-I)$ in the construction of Appendix~\ref{app:branches}. The output coordinates use $c=\sigma_V\beta$. We call the branch with $\alpha>0$ the head and the branch with $\alpha=0$ the tail. Both retain the nonzero off-diagonal scalar component $\beta$. The branch
change therefore removes the diagonal scalar component. 

\subsection{Sharp family values}\label{sec:familyvalue}
\begin{theorem}[Sharp values in the pair-register family]\label{thm:family}
Write $t=1/n$ for inverse size. There are algebraic functions
$P_H(t)$ on $[1/24,1/7]$ and $P_T(t)$ on $[0,1/23]$, defined by the isolated rational systems (computer-assisted proof, Appendix~\ref{app:branches}). Then
\begin{equation}\label{eq:familyvalue}
 \PW(n)=\begin{cases}
 P_H(1/n)/n,&7\le n\le23,\\
 P_T(1/n)/n,&n\ge24.
 \end{cases}
\end{equation}
For each $n$, a finite circuit with sign tuple $(-1,+1,+1,+1)$ attains the selected value. The two selected branches meet at one transition $\tau$ with
\begin{equation}\label{eq:switch}
 \frac{434161}{10^7}<\tau<\frac{434162}{10^7},\qquad
 23<\tau^{-1}<24.
\end{equation}
The family constant $c_W=P_T(0)$ satisfies
\begin{equation}\label{eq:cW}
 7.43703431<c_W<7.43703432.
\end{equation}
For every integer $n\ge24$,
\begin{equation}\label{eq:familyerror}
 \frac{c_W}{n}-\frac{21}{n^2}<\PW(n)
 <\frac{c_W}{n}-\frac{18}{n^2}.
\end{equation}
\end{theorem}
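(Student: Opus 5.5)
The plan reduces the success probability of the circuit \eqref{eq:Wcircuit} to an explicit rational function of finitely many parameters and of $t=1/n$. The family $\mathcal W_n$ is finite-dimensional: two scalar skew entries $u_2,v_2$, six standard-block entries $u_{01},u_{02},u_{12},v_{01},v_{02},v_{12}$, the preparation ratio $a:b$, and the sign tuple. By covariance we may fix the order $x=\mathrm{id}$.

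The first step tracks the state through the decomposition \eqref{eq:decomp}. $\mathcal D_x$ applied to $aI+b(J-I)$ produces the scalar component, a $T$-type standard component along the rank vector, and an antisymmetric residual $R_a$. The orthogonal $U$ then mixes these inside the $G_2$ and $G_3$ multiplicity spaces and acts by $\sigma_U$ on $R_a$. The second $\mathcal D_x$ is applied again, and we decompose the result into isotypic pieces once more. Finally we compute the probability that the first label equals the rank-one object, namely $\sum_j A_{1j}^2$ after $V$.

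Each piece is a sum over $j$ of low-degree polynomials in the ranks. These sums evaluate in closed form via power sums $\sum k^m$, so after multiplying by $n$ the success becomes a rational function of $t$ and the Cayley parameters. Because \eqref{eq:cayley} is rational, $n\PW(n)$ is the supremum of a rational objective over each of the sixteen sign charts. Appendix~\ref{app:branches} shows that these charts cover the commutant up to closure, so the supremum over the family equals the maximum over the charts.

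The second step solves, for each chart, the Lagrange or KKT stationarity system with $t$ as a parameter. This yields two relevant branches. On the head branch $\alpha>0$; the tail branch is the boundary face $\alpha=0$, where one equation is replaced by the vanishing of the diagonal scalar component. Each branch defines an algebraic function, $P_H$ or $P_T$, through an isolated rational system. For each, we certify a nonsingular root in a small box by interval Newton/Krawczyk contraction, uniformly in $t$ on the stated interval. Continuation in $t$ needs Jacobian nonsingularity throughout the interval, which we check by subdividing it. We then show that the $(-1,+1,+1,+1)$ chart dominates the others. Where a chart has no interior critical point exceeding the candidate value, we bound it using interval enclosures of its maximum over compact Cayley parameter regions, together with a separate argument that large Cayley parameters approach other charts' closures.

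The third step locates the switch $\tau$ by isolating the root of $P_H-P_T$ in the displayed interval, which gives \eqref{eq:switch} and therefore the case split \eqref{eq:familyvalue}. Evaluating $P_T(0)$ to enough digits gives \eqref{eq:cW}. For \eqref{eq:familyerror}, we expand $P_T(t)=c_W+P_T'(0)t+R(t)$ and show $-21<(P_T(t)-c_W)/t<-18$ on $[0,1/24]$. To do this we bound $P_T'$ over that interval by interval arithmetic, using implicit differentiation of the certified system. Dividing by $n$ then gives the claim.

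The main obstacle is global optimality within the family, rather than the location of the critical points. We must exclude higher maxima on other charts and on boundary or infinite-parameter regions, uniformly over all $n\ge7$. Finitely many $n$ are handled by direct certification, and $t$-uniform branch-and-bound handles the tail interval $[0,1/24]$.
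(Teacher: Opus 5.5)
\textbf{Gap: the upper bound.} Your attainment side (certified stationary branches in $t$, an enclosure of $P_T(0)$, and a derivative bound integrated to give \eqref{eq:familyerror}) matches what the paper does. The upper half of ``sharp'' is not established, however. You defer global optimality to a branch-and-bound over the Cayley parameters, and as set up this does not work, for two reasons.

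First, the Cayley domain is non-compact, and the optimal parameters grow with $n$. On the tail both scalar parameters equal $-\sqrt{n-1}$, and $u_{02}=g_0=\sqrt{G/t}$. So there is no fixed compact box on which a search uniform down to $t=0$ can run, and ``large parameters approach other charts' closures'' gives no bound. Second, after the first query the standard component is the single vector $\hat e$, so rotations in $U$ that fix $\hat e$ leave the success unchanged. Stationary points are therefore not isolated in the Cayley coordinates, your per-chart KKT Jacobians are singular, and neither Krawczyk isolation nor a local closing argument at the maximizer applies as stated.

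The missing idea is a dual certificate. Write every family circuit, for every sign tuple, in the fourteen output-amplitude coordinates. There, orthogonality of $U$ and $V$ becomes the five quadratic equalities \eqref{eq:five}, in which the signs enter only squared, and success is the ratio $F/\mathcal N$ of quadratic forms. The identity \eqref{eq:globalidentity} then gives $F/\mathcal N\le p$ whenever $np-m_2\ge0$, $D_m\succeq0$ and $\mathsf S_9\succeq0$. The multipliers $(p,m_1,m_2,m_4,m_5)$ taken from the same stationary roots meet these conditions: stationarity supplies the kernel vectors, and positive leading blocks on the posterior root enclosures give the rest. This bounds all sixteen charts and all parameters at once, uniformly in $n$, with no search.

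\textbf{Gap: locating the switch.} At $\tau$ the tail root solves the head system with the same multipliers, because $\gamma_T=0$ there. On each branch $\pi'=-\partial_t\mathcal L/\mathcal N$, so $P_H$ and $P_T$ agree to first order at $\tau$. Hence $P_H-P_T$ has a zero of order at least two there, which neither a sign change nor interval Newton can isolate. Comparing values also cannot select the branch: on $[1/24,\tau)$ the head root has $\Delta_H<0$, i.e.\ $\alpha^2<0$, so it is not realizable. The paper selects branches by validity instead. The tail bound requires the $\alpha^2$ coefficient $\gamma_T=\pi-m_{2,T}$ to be nonnegative, and the head lift requires $\Delta_H\ge0$. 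The switch $\tau$ is the transversal zero of $\gamma_T$, certified by sign changes at $434161/10^7$ and $434162/10^7$ together with $-10<\gamma_T'<-47/10$.
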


\begin{proof}[Proof structure]
To prove optimality within the family, we combine a quadratic upper bound with an algorithm that attains it. The upper bound applies to every physical circuit in the declared family, for every allowed sign tuple. The fourteen coordinates are
\[
 (\alpha,A,L_0,\boldsymbol{Q}_0,s,c,g_0,g_1,g_2,b_{\rm out},L_1,\boldsymbol{Q}_1,L_2,\boldsymbol{Q}_2).
\]
They describe the scalar, standard and residual output amplitudes of Appendix~\ref{app:family} and satisfy $e_1=\cdots=e_5=0$ in \eqref{eq:five}. The quadratic identity \eqref{eq:globalidentity} bounds normalized success by $p$ when $np-m_2\ge0$, $D_m\succeq0$ and $\mathsf S_9\succeq0$. Solutions of two smaller stationary systems give algorithms that attain the bound. In the scaled coordinates of Appendix~\ref{app:branches}, $\pi=np$. On the tail, the slack $\gamma_T=\pi-m_{2,T}$ defined there is nonnegative up to the transition and selects the valid branch. The head sets $m_2=\pi$ and retains the positive diagonal scalar amplitude. Appendix~\ref{app:branches} proves these signs and the matching physical lift.

The systems in Appendix~\ref{app:branches} specify isolated nonsingular
roots on continuous parameter intervals. Their leading slack blocks are positive on the posterior root enclosures of radius below $10^{-9}$ in every parameter cell. Exact kernel identities
restore the singular full slacks at the selected roots and remain valid
at the branch transition. The resulting dual bound matches the value of
the physical lift. A single standard rotation maps both required input
columns to their output columns, and its inverse Cayley denominator is
nonzero. Thus finite parameters attain the bound. Finally, the certified
actual-root derivative satisfies $-21<P_T'(t)<-18$ on $[0,1/23]$.
Integrating between zero and $1/n$ proves \eqref{eq:familyerror}.
\end{proof}

The endpoint $t=0$ defines the algebraic limit of $n\PW(n)$. At size seven, $\PW(7)\approx0.711812116$ is strictly below $\Q(7,1)\approx0.716632459$, so at this size the pair-register restriction lowers the optimum.

\section{Uniform unrestricted bounds and the scaled limit}\label{sec:uniform}
The family gives a lower bound on the unrestricted optimum. We now derive an upper bound from the complete model. A subset-transfer argument then proves that the scaled optimum converges.

\begin{theorem}[Uniform bounds]\label{thm:uniform}
Set
\begin{equation}\label{eq:constants}
 c_U=\frac{743815019}{10^8},\qquad c_L=\frac{743703431}{10^8}.
\end{equation}
For every finite integer $n\ge7$ (computer-assisted proof, Appendix~\ref{app:uniform}),
\begin{equation}\label{eq:upper}
 \Q(n,1)\le\min\{1,c_U/n\}.
\end{equation}
For every integer $n\ge24$,
\begin{equation}\label{eq:gapfamily}
 \frac{c_L}{n}-\frac{21}{n^2}<\PW(n)\le\Q(n,1),\qquad
 0\le\Q(n,1)-\PW(n)
 <\frac{27897}{25000000n}+\frac{21}{n^2}.
\end{equation}
Moreover, $n\Q(n,1)$ is nondecreasing and bounded. Therefore
\begin{equation}\label{eq:limit}
 c_*:=\lim_{n\to\infty}n\Q(n,1)
 \quad\text{exists},\qquad c_W\le c_*\le c_U,
 \qquad \Q(n,1)=\frac{c_*}{n}+o(1/n).
\end{equation}
\end{theorem}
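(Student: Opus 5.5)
The plan has four parts: an upper bound from a dual certificate, a lower bound from the pair-register family, monotonicity by a subset-transfer argument, and the limit as a consequence of monotonicity and boundedness.

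\textbf{Upper bound \eqref{eq:upper}.} I would use weak duality \eqref{eq:gap} for the complete model of Theorem~\ref{thm:model} with $r=1$. The goal is a dual triple $(\theta,H,Z)$ satisfying \eqref{eq:dual} with $\theta=c_U/n$ for every $n\ge7$. The bound $\Q(n,1)\le1$ is trivial.

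By group averaging, I may take $H$ and $Z$ invariant. Then $H$ lies in $\operatorname{End}_{S_n}(\mathcal F_1)$, with three coordinates, and $Z$ lies in the sixteen-dimensional supported global space of Table~\ref{tab:algebras}. The condition $Z-F_1^{(1)}\succeq0$ needs checking only for output label $1$, in the fixed-output algebra, since the other labels follow by relabelling. By the fixed-edge symmetry, $H-\mathcal B_\ell^TZ\mathcal B_\ell\succeq0$ needs checking only for the identity sector and one edge sector.

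Each positivity condition becomes positivity of a regular matrix $\mathcal K_A$ via \eqref{eq:regular}, of fixed size for $n\ge7$. I would choose the dual coefficients as rational functions of $t=1/n$, scaled so that the diagonal of $H$ is at most $c_U t$. Each reduced regular matrix then has entries that are rational in $t$.

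\textbf{Main obstacle: positivity uniform in $n$.} This requires positivity on the whole interval $t\in[0,1/7]$, not at finitely many sizes. I would first fit the dual numerically at many sizes and extract a rational parametrization in $t$. I would then certify positivity by exact arithmetic on these blocks. Where possible, I would write $\mathcal K=L(t)^TD(t)L(t)$ with polynomial $D(t)$ having positive coefficients in a Bernstein basis on $[0,1/7]$. Otherwise I would use interval subdivision of $[0,1/7]$, checking positive leading blocks plus kernel identities as in \eqref{eq:singularpsd}. If the certificate degenerates near $t=0$, I would treat sizes $7\le n\le n_0$ individually by rational certificates and cover $t\le1/n_0$ by the parametric one. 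The slack between $c_U\approx7.43815$ and the expected optimum is about $10^{-3}$; this should allow a strictly positive, hence robust, certificate.

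\textbf{Lower bound and gap \eqref{eq:gapfamily}.} The first inequality follows from \eqref{eq:familyerror} together with $c_L<c_W$ from \eqref{eq:cW}. The inequality $\PW(n)\le\Q(n,1)$ holds because the family consists of admissible protocols. For the gap, subtract the lower bound from \eqref{eq:upper}:
\[
\Q(n,1)-\PW(n)<\frac{c_U-c_L}{n}+\frac{21}{n^2},
\]
and note that $c_U-c_L=111588/10^8=27897/25000000$.

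\textbf{Monotonicity and the limit \eqref{eq:limit}.} For subset transfer, I would construct an algorithm on $n+1$ objects as follows. Choose a uniformly random $n$-subset, fix a bijection of it with $[n]$ in workspace, and run an optimal $n$-object algorithm, which exists by the remark after Theorem~\ref{thm:model}. Comparisons within the subset are ordinary comparisons of the larger instance, so the query budget is unchanged. The global minimum lies in the subset with probability $n/(n+1)$, independently of the relative order inside it. Given that it does, the success is at least $\Q(n,1)$. Hence $(n+1)\Q(n+1,1)\ge n\Q(n,1)$.

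The sequence $n\Q(n,1)$ is therefore nondecreasing, and it is bounded by $c_U$ from \eqref{eq:upper}, so the limit $c_*$ exists and satisfies $c_*\le c_U$. The family bound gives $n\Q(n,1)\ge n\PW(n)\to c_W$, so $c_*\ge c_W$. Finally, $\Q(n,1)=c_*/n+o(1/n)$ is a restatement of convergence.
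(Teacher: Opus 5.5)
Your derivation of \eqref{eq:gapfamily}, the monotonicity and the limit is correct and essentially the paper's argument. Lemma~\ref{lem:subset} is your $n\to n+1$ transfer, stated for general $k\le n$ and applied to an $\varepsilon$-optimal protocol.

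\textbf{The gap is in the upper bound \eqref{eq:upper}.} Your plan leaves its decisive step, uniformity as $n\to\infty$, unresolved. After the reduction via \eqref{eq:regular}, the entries $\tr(A_p^TSA_q)$ depend on $n$ through the products \eqref{eq:orbproduct} and the trace weights \eqref{eq:metricorb}. Different orbitals scale like different powers of $n$, so the difficulty sits at $t\to0$. After rescaling, the limit matrices may be singular. A Bernstein or interval certificate on a closed cell $[0,1/n_0]$ would then need exact control of a $t$-dependent kernel, which you do not supply.

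Your fallback treats $7\le n\le n_0$ size by size. That addresses the opposite end of the interval and does nothing for the infinite tail.

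The robustness you rely on is also not available. With $\theta=c_U/n$, the margin at size $n$ is $c_U-n\Q(n,1)$, which decreases to $c_U-c_*$. Theorem~\ref{thm:strict} bounds this by about $1.05\cdot10^{-3}$. The numerics of Appendix~\ref{app:numerics} place it near $2\cdot10^{-8}$, comparing the flag-dual value $7.43815017$ with $c_U$. Any all-$n$ certificate is therefore nearly tight as $t\to0$, not comfortably interior.

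\textbf{The missing idea is the device the paper uses to avoid parametric positivity altogether.} Take $\theta=c_U/n$, $\mathbf h=\overline{\mathbf h}/n$ and $z=\bar z/n$, with $\overline{\mathbf h},\bar z$ constant in the signed orbital coordinates. By \eqref{eq:momentformula}, $nf_a(n,1)=C_{a,0}/(s_a-1)!$ is size-independent. Hence the scaled slacks \eqref{eq:uniformslacks} have $n$-independent rational coefficients.

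Inclusion--exclusion over the labels two features share then gives the shared-subset identity \eqref{eq:flagidentity}. It writes each full slack as $\sum_{\mathcal J}\mathsf T_{\mathcal J}\mathsf F_{|\mathcal J|}\mathsf T_{\mathcal J}^T$, where the fixed-size flag matrices $\mathsf F_k$ do not depend on $n$. Positivity for every $n\ge7$ at once thus reduces to eleven exact rational factorizations \eqref{eq:flagfactor}. No cells in $t$, interval arithmetic, separate small-$n$ cases or limit analysis are needed.

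Your parametric route might be made to work; the paper itself uses Bernstein tubes in $t$ for the family in Theorem~\ref{thm:family}. As written, however, it does not establish \eqref{eq:upper}.
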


\subsection{A uniform upper certificate}
To obtain a bound valid for every $n\ge7$, we use fixed-size matrices to certify positivity of the complete-model dual slacks. A flag is a comparison-sign feature with specified shared labels and otherwise unmarked labels. Each relevant signed kernel has a shared-subset decomposition
\begin{equation}\label{eq:flagdecomp}
 S(n)=\sum_\mathcal{J} \mathsf T_\mathcal{J}\mathsf F_{|\mathcal{J}|}\mathsf T_\mathcal{J}^T.
\end{equation}
Here $\mathcal J$ is a literal shared-label subset, $\mathsf T_\mathcal J$ a signed selection matrix and $\mathsf F_k$ a fixed-size matrix. Positive $\mathsf F_k$ therefore give full-kernel positivity for every finite $n\ge7$. Appendix~\ref{app:uniform} proves the Boolean cancellation identity. Eleven rational positive blocks give complete query and output slacks with objective $c_U/n$, so \eqref{eq:gap} bounds all admitted protocols, also with complex amplitudes and arbitrary finite workspace. Flag positivity is sufficient for this upper bound.

\subsection{Subset transfer and its consequences}
To prove convergence of the scaled optimum, we transfer a finite two-comparison algorithm to a uniformly chosen subset of a larger instance. The construction preserves its comparison budget.

\begin{lemma}[Subset transfer]\label{lem:subset}
Suppose an algorithm on $k$ objects succeeds in minimum selection with
probability at least $p$ on every strict order and uses at most two
ordinary comparisons. For every $n\ge k$, an algorithm with the same
budget succeeds with probability at least $kp/n$.
\end{lemma}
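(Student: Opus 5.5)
The plan is to embed the $k$-object algorithm into a uniformly random $k$-subset of the $n$ objects. First choose, with private randomness, a uniformly random $k$-subset $K\subseteq[n]$ and a bijection $\pi:[k]\to K$. Then run the given algorithm $A$ on the induced strict order on $K$: every address $(i,j)$ with $i\ne j$ in $[k]$ is mapped to $(\pi(i),\pi(j))$. This mapping is an input-independent free operation on the address register, applied before each oracle call and undone after it. Because $[x_{\pi(i)}<x_{\pi(j)}]$ is exactly the comparison bit of the induced order on $[k]$, each query of $A$ becomes one ordinary comparison-XOR call \eqref{eq:oracle} on the $n$-object instance. The budget of two calls is therefore preserved, and the address domain stays inside distinct pairs of $[n]$. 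If $A$ outputs $i$, we output $\pi(i)$.

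Next I would verify that the simulated algorithm is a legitimate protocol in the sense of Section~\ref{sec:model}. The choice of $(K,\pi)$ is classical private randomness, and conditioned on it all free operations are fixed isometries. Finite workspace, instruments and the final measurement are all inherited from $A$. No postselection is introduced, and branches that stop early are handled as in Section~\ref{sec:sectors}. Conditioned on $K$, the induced order is a strict order on $k$ labels. By hypothesis, the output is then the minimum of $K$ with probability at least $p$.

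For the success bound, note that the output is the global minimum whenever two events both occur: the global minimum lies in $K$, and $A$ returns the minimum of $K$. Because $K$ is uniform, $\Pr\{\text{global minimum}\in K\}=k/n$. This event depends only on $K$ and the hidden order, and the conditional success of $A$ is at least $p$ for every induced order. So the overall success is at least $(k/n)p$ on every strict order $x$, which is a worst-case guarantee. No relabelling argument is needed, although one could instead average over orders and invoke the relabelling equivalence.

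I expect the main obstacle to be only a technical one: checking that the address relabelling is compatible with the coherent oracle interface, and in particular with the response-plus calls used for identity sectors in Section~\ref{sec:sectors}. Those calls use a fixed distinct pair, and its image under $\pi$ is still a distinct pair in $[n]$ because $k\ge2$. Relabelled addresses never collide, since $\pi$ is injective. The conditional control on the classical choice of $(K,\pi)$ can be realized by a finite classical record in workspace, so the argument goes through.
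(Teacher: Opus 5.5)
Your proposal is correct and follows the paper's proof essentially step for step. You sample a uniform $k$-subset with a fixed bijection, reversibly encode each local query into one global ordinary call, and map the output back, obtaining the bound $(k/n)\cdot p$ from the event that the global minimum lies in the subset. Your additional interface checks (injectivity of the bijection, response-plus calls, a classical record of the subset) match the paper's remark that sampling and retaining the subset uses no comparisons.
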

\begin{proof}
Select a uniform $k$-element subset using private randomness and fix a
bijection from the local labels to that subset. Reversibly encode each
local query through the same bijection, make one global ordinary call,
and undo the encoding. Invert the local output through that bijection.
The global minimum belongs to the subset with probability $k/n$. For
every subset that contains it, the local minimum is the global minimum
and the success probability is at least $p$. Thus the total success is
at least $kp/n$. Sampling and retaining the subset uses no comparisons.
\end{proof}

\begin{proof}[Completion of Theorem~\ref{thm:uniform}]
The upper bound follows from the preceding certificate and the fact that a probability is at most
one. The lower bound from the family follows from Theorem~\ref{thm:family} and
$c_W>c_L$. Subtract this lower bound from $c_U/n$ and use
$c_U-c_L=27897/25000000$ to obtain \eqref{eq:gapfamily}.
For $n\ge k$, apply Lemma~\ref{lem:subset} to an
$\varepsilon$-optimal $k$-object protocol and let $\varepsilon$ decrease
to zero. It gives $n\Q(n,1)\ge k\Q(k,1)$. The constant $c_U$ bounds this
nondecreasing sequence. Its limit is at least $c_W$ because
$\PW(n)\le\Q(n,1)$ and $n\PW(n)\to c_W$. This proves
\eqref{eq:limit}.
\end{proof}

Monotonicity extends finite-size lower bounds on scaled success to larger populations. A certified feasible point therefore proves that the unrestricted limit exceeds the family constant. Certificates at two consecutive sizes also determine the threshold for outperforming six classical comparisons.

\begin{theorem}[The family is not asymptotically optimal]\label{thm:strict}
$c_*>7.43710307$, and in particular $c_*>c_W$ (computer-assisted proof, Appendix~\ref{app:strict}).
\end{theorem}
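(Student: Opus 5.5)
The plan is to combine the monotonicity established in the completion of Theorem~\ref{thm:uniform} with one certified finite-size lower bound. Since $n\Q(n,1)$ is nondecreasing and converges to $c_*$, every finite $k$ gives $c_*\ge k\Q(k,1)$. It therefore suffices to find a single size $k$ and a feasible point of the primal program \eqref{eq:primal} with $r=1$ whose objective $p$ satisfies $kp>7.43710307$. By Theorem~\ref{thm:model}, any such feasible point is realized by a finite ideal algorithm with two charged ordinary calls, so $\Q(k,1)\ge p$. The comparison $c_*>c_W$ then follows from the upper bound $c_W<7.43703432$ in \eqref{eq:cW}.

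First, choose the size. The values $7\Q(7,1)\approx5.016$ and $8\Q(8,1)\approx5.269$ are far too small, and the family error bound \eqref{eq:familyerror} shows that the scaled success approaches its limit only at rate about $20/n$. The size therefore has to be large enough that $k\Q(k,1)$ exceeds $c_W$ by roughly $7\times10^{-5}$. Two routes are available. One is to work directly at a large $k$ in the symmetry-reduced model of Section~\ref{sec:symmetry}, where the number of coefficients is fixed. I would solve the reduced SDP numerically at increasing $k$ and watch $k\Q(k,1)$ until it clears the target. The other, possibly more economical, route is to start from the optimal pair-register circuit at large $n$ and enlarge it, for example by allowing workspace or general invariant $\mathcal X_\ell,\mathcal Y_i$ beyond the pair-register commutant. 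This would show that the unrestricted optimum beats $\PW(n)$ by a margin of order $1/n$ rather than $1/n^2$.

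Second, turn the numerical solution into a rational certificate. I would take the numerical optimum of the reduced program and round $w$, the $\mathcal X_\ell$ and the $\mathcal Y_i$ to rationals in the invariant bases. I would then restore the two causal equalities exactly. The first equality $\sum_\ell\mathcal X_\ell=\diag(w)$ can be fixed by adjusting one block. The second equality can be fixed by defining one output Gram, say $\mathcal Y_{i_0}$ (or the invariant output block), as the exact remainder $\sum_\ell\mathcal B_\ell\mathcal X_\ell\mathcal B_\ell^T-\sum_{i\ne i_0}\mathcal Y_i$, compressed by $\Pi_{\rm ord}$. The objective is then an exact rational number, and the comparison with $7.43710307/k$ is exact arithmetic. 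Support is automatic because the images of $\mathcal B_\ell$ lie in $\mathcal R$. To avoid a $k$-dimensional computation, I would check positivity through the regular representations \eqref{eq:regular}, using exact $LDL^T$ factorizations of the rational matrices $\mathcal K_A(s)$.

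The main obstacle is the last step: keeping every block strictly positive definite after rounding. The optimum is attained on a face with singular blocks, as the kernel structure \eqref{eq:kernelblocks} shows, and exact rational repair can push small eigenvalues negative. I would first shrink the point toward an interior feasible point, for example one mixing in a uniform blind-pick protocol. This gives strict positivity with a margin much larger than the rounding error, at a cost in objective well below the available slack of about $7\times10^{-5}/k$. If that slack proves too tight at the chosen $k$, I would increase $k$ so that $k\Q(k,1)$ sits further above the threshold.
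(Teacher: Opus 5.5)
Your proposal follows the paper's proof essentially step for step. Monotonicity of $n\Q(n,1)$ reduces the claim to one exact rational feasible point at a single large size ($n_0=50000$ in the paper), obtained by rounding a numerical solution of the reduced program, restoring the first flow through the identity-query seed and the second through a supported correction $E_2g/n_0$ of the output seed, mixing in a full-rank uniform-output point with weight $10^{-4}$, and certifying positivity by exact elimination in the regular images.

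The one thing to adjust is your slack estimate. That mixing costs about $10^{-4}(n_0v-1)\approx6\times10^{-4}$ in scaled objective. What absorbs this cost is the margin $n_0\Q(n_0,1)-7.43710307$ at the chosen size, not the roughly $7\times10^{-5}$ gap between the target and $c_W$.
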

\begin{proof}
At $n_0=50000$, the exact rational point of Appendix~\ref{app:strict}
is feasible for \eqref{eq:primal} and has objective $v_0$ with
$n_0v_0>7.43710307>c_W$. Theorem~\ref{thm:uniform} gives $c_*\ge n_0\Q(n_0,1)\ge n_0v_0>7.43710307$, and in particular $c_*>c_W$.
\end{proof}

\begin{corollary}[The six-comparison threshold]\label{cor:threshold}
For $n\ge7$, two quantum comparisons have greater minimum-selection success than six classical comparisons if and only if $n\ge46$ (computer-assisted proof, Appendix~\ref{app:thresholdcert}).
\end{corollary}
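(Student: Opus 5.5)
We prove the corollary by comparing $\Q(n,1)$ with the six-comparison classical value $C_6(n,1)=7/n$ from Section~\ref{sec:model}, which is valid for every $n>6$. After multiplying by $n$, the claim becomes $n\Q(n,1)>7$ exactly when $n\ge46$. By Theorem~\ref{thm:uniform}, $n\Q(n,1)$ is nondecreasing in $n$. Hence the set of sizes $n\ge7$ at which quantum strictly wins is an up-set of integers. The whole statement therefore reduces to two certificates at consecutive sizes: $45\,\Q(45,1)\le 7$ and $46\,\Q(46,1)>7$.

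For the lower side at $n=46$, we produce an exact rational feasible point of \eqref{eq:primal}, as in the proof of Theorem~\ref{thm:strict}. We would try the pair-register family first. Since $46\ge24$, Theorem~\ref{thm:family} gives
\[
46\,\PW(46)>c_W-21/46>7.437-0.457>6.98.
\]
This is not enough, so a family value is too weak at this size. Instead we optimize the symmetry-reduced SDP of Section~\ref{sec:symmetry} numerically at $n=46$. We then round to rational coefficients, project exactly onto the two causal equalities and the support condition $\Pi_{\rm ord}\mathcal Y_i\Pi_{\rm ord}=\mathcal Y_i$, and verify positivity of the regular representations \eqref{eq:regular} in exact arithmetic. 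Theorem~\ref{thm:model} then turns this point into a finite algorithm with success above $7/46$. Monotonicity extends the strict win to every $n\ge46$.

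For the upper side at $n=45$, we need a dual certificate \eqref{eq:dual} at $n=45$ with objective $\theta\le 7/45$. Then weak duality \eqref{eq:gap} gives $\Q(45,1)\le7/45$. Monotonicity then yields $n\Q(n,1)\le 45\,\Q(45,1)\le 7$ for all $7\le n\le45$, so quantum does not strictly win at any of these sizes. The uniform bound $c_U/n$ does not help here, since $c_U>7$. The certificate must therefore be size-specific: a numerically optimized dual in the reduced algebras, rationalized, with positivity of $H-\mathcal B_\ell^TZ\mathcal B_\ell$ and of $Z-F_i^{(1)}$ on $\mathcal R$ checked exactly.

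The main obstacle is this $n=45$ upper certificate if the true value $45\,\Q(45,1)$ lies very close to $7$, leaving little room for rational perturbation. If rounding fails, we would use the active-face and kernel-block technique of \eqref{eq:kernelblocks}--\eqref{eq:singularpsd}. That technique verifies singular slack blocks through exact kernel identities and positive leading blocks on a small interval box, as in Appendix~\ref{app:exact}. A secondary concern is interpreting ``if and only if'' at $n=45$ in the equality case. The required non-strict bound $\le 7/45$ tolerates equality, so a dual value that is at most, or strictly below, $7/45$ suffices.
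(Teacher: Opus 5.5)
Your proposal is correct and is essentially the paper's proof: monotonicity of $n\Q(n,1)$ from Theorem~\ref{thm:uniform}, together with $C_6(n,1)=7/n$ for $n\ge7$, reduces the claim to two certificates checked exactly in the reduced signed algebras, namely a rational primal at $n=46$ with $46v_{46}>7$ and a rational dual at $n=45$ with $45\theta_{45}<7$. The one ingredient the paper adds is that the rounded, flow-repaired primal is mixed with weight $10^{-4}$ of the uniform-output point so that exact positivity holds, and the dual has enough slack ($45\theta_{45}\approx6.9954$) that your fallback kernel-block technique is not needed.
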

\begin{proof}
The rational primal at size 46 in Appendix~\ref{app:thresholdcert} has
$46v>7$, while the complete-model dual at size 45 has $45\theta<7$.
Theorem~\ref{thm:uniform} makes $n\Q(n,1)$ nondecreasing, so the first
inequality holds at every larger size and the second rules out every size
from seven through 45. Six classical comparisons have optimum $7/n$ for
$n\ge7$, and select the best object with certainty when $n\le7$.
For $n\le6$, six classical comparisons select the best object with certainty, so two comparisons cannot exceed them. Thus the threshold statement holds for every $n$.
\end{proof}

The remaining uncertainty in $c_*$ is less than $0.00104712$. Theorems~\ref{thm:uniform} and~\ref{thm:strict} give $7.43710307<c_*\le7.43815019$, with an exact feasible-point lower bound above the printed lower endpoint. The argument does not decide whether the upper endpoint is attained.
The $n^{-2}$ remainder in \eqref{eq:familyerror} holds for the constructed
family, whereas the proved remainder for the unrestricted optimum in
\eqref{eq:limit} is $o(1/n)$.

The bounds leave the exact value of $c_*$ unresolved. Numerical solutions of the complete model and extrapolation motivate the conjecture that $c_*$ equals the flag optimum. This is the smallest upper constant that a dual certified by flag positivity can give, and its numerical value is about $7.43815017$. The extrapolation assumes an expansion in powers of $1/n$. Bounds on numerical error at the sampled sizes can be propagated to
the fitted intercept. These bounds do not cover extrapolation bias
or error in the numerical flag optimum. Appendix~\ref{app:numerics} gives the numerical evidence and its limitations.

\section{Interpretation and remaining questions}\label{sec:interpretation}
Matching certificates determine the stated unrestricted optima at seven and eight objects. At arbitrary size, Theorem~\ref{thm:uniform} bounds minimum selection, while the pair-register theorem is sharp within its family for every size at least seven. Classically, $q<n$ comparisons give optimum $(q+1)/n$ by Section~\ref{sec:model}. Corollary~\ref{cor:threshold} places the exact crossing of the six-comparison value $7/n$ at $n=46$. With two comparisons, the asymptotic success ratio is
\begin{equation}\label{eq:ratio}
 \frac{7.43710307}{3}<\frac{c_*}{3}\le\frac{7.43815019}{3}.
\end{equation}
For minimum selection, the equivalent blind-pick size is $n$ times the success probability. The equivalent blind-pick size~\cite{wuqoo} tends to $c_*\approx7.44$, against $3$ for two classical comparisons and $(3+\sqrt{13})/2\approx3.30$ for one quantum comparison. The ratio is approximately $2.48$, a constant-factor gain at a fixed comparison budget.

Figure~\ref{fig:scaled} compares size-scaled minimum success. Solid seven- and eight-object points are exact, the family curve uses certified branch enclosures, and hollow complete-model points are numerical. The one-comparison curve and limit use~\cite{wuqoo}. The second panel gives numerical $n(\Q-\PW)$ and $c_U-c_W$. The first panel's horizontal lines are the unrestricted upper constant and equivalent blind-pick sizes of two and six classical comparisons. All curves use exact comparisons and accept only the best object.
\begin{figure}[!htbp]
\centering
\includegraphics[width=.92\linewidth]{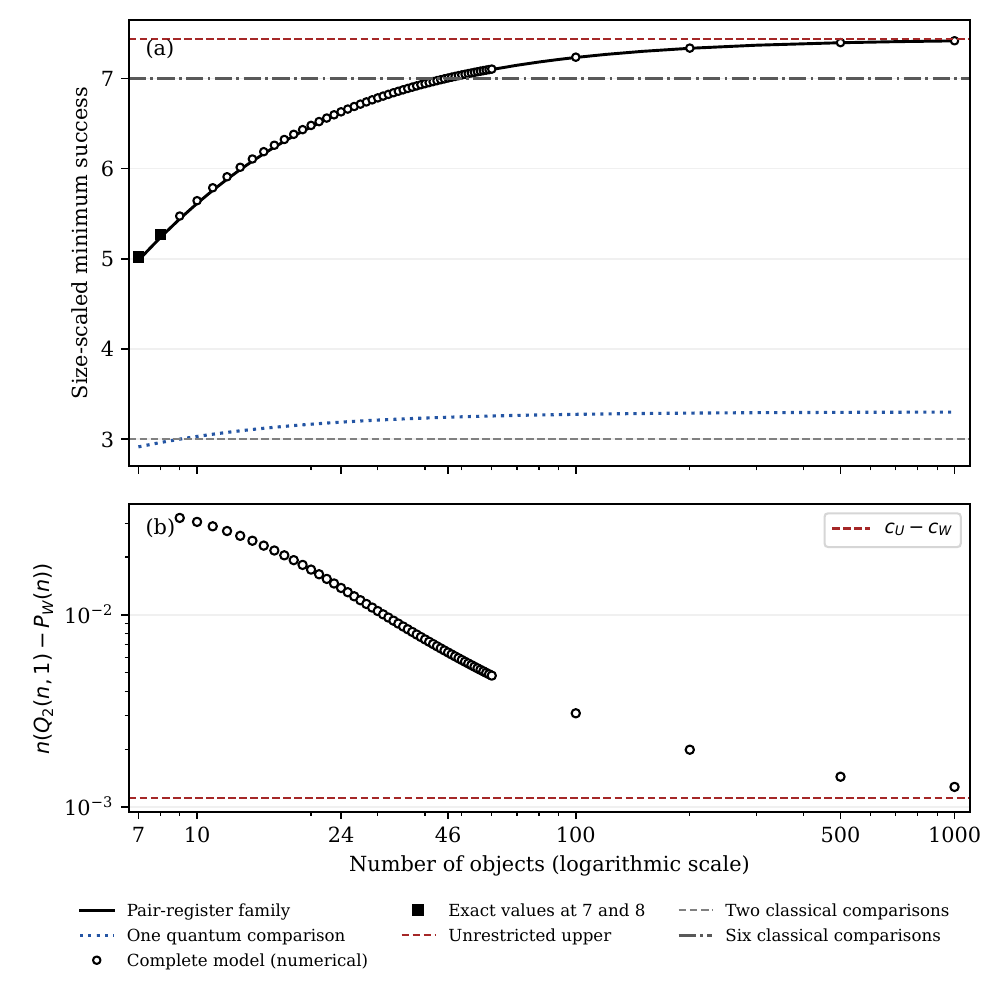}
\caption{Size-scaled probability of selecting the best object}\label{fig:scaled}
\end{figure}

The success probability depends on the information supplied by the oracle. If an oracle instead marks the best object, Grover search after $k$ iterations has success $\sin^2((2k+1)\theta)$, where $\sin\theta=n^{-1/2}$~\cite{BoyerEtAl1998}. Two iterations therefore find it with probability $\sin^2(5\theta)$.
This is about $25/n$ for large $n$, and is optimal in the range before
the first maximum~\cite{Zalka1999}. Comparison access reveals relative
order, and two comparisons reach about $7.44/n$. The numbers therefore
refer to different information interfaces.

In the comparison model, exact comparisons are charged and interquery unitaries are free. Gate complexity, running time and noisy access are separate questions. To build one comparison from an oracle that evaluates one object, computing and uncomputing both values costs four evaluation calls~\cite{wuqoo}. Two comparisons then cost eight evaluations. For $n\ge8$, eight evaluated objects contain the best one with probability $8/n$. Theorem~\ref{thm:uniform} gives $\Q(n,1)\le c_U/n<8/n$ for these sizes. Under this compute--uncompute implementation, the higher success per comparison does not give a gain per evaluation over sampling eight objects.

Two mathematical questions follow directly from the results. One is to
determine the unrestricted constant $c_*$ and the finite-size values
$\Q(n,1)$ beyond the evaluated cases. The values $\Q(n,r)$ for
$n=4,5,6$ are not certified exactly here. The numerical solutions of the unreduced programs appear in Appendix~\ref{app:numerics}. The other is to evaluate general
top-r objectives in the complete model, particularly when the accepted
fraction of the population is fixed. The exhibited minimum-optimal algorithm has top-2 success about $0.8995$, below $\vartheta_2\approx0.9222$, so it does not attain the top-2 optimum. Whether some minimum-optimal algorithm does remains open. 

\FloatBarrier
\section{Conclusion}\label{sec:conclusion}
We studied single-object ordinal selection under a branchwise budget of
two exact coherent comparisons. A complete coefficient-Gram model and its symmetry
reduction retain the causal and measurement constraints of the physical algorithms.
Matching algebraic certificates determine all top-r optima for seven objects.
They also determine the minimum-selection optimum for eight objects.
For every size at least seven, we determine the minimum-selection optimum within an explicit pair-register
family. A separate unrestricted certificate bounds its distance from the optimum.
A finite rational feasible point proves that the family is not asymptotically optimal.

The size-scaled unrestricted minimum-selection probability converges to a constant
in $(7.43710307,7.43815019]$.
Two quantum comparisons outperform six classical comparisons exactly when $n\ge46$.

\section*{Use of AI tools}
The author used AI tools to assist with literature research, mathematical
analysis, code development, and manuscript preparation. The author assumes
full responsibility for the content of the article.

\appendix
\section{Signed symmetry reduction}\label{app:symmetry}
The signed model certificate in the verification archive~\cite{WuTwoComparisonArchive} supplies the rational catalogue, products, maps and moments for the symmetry reduction of Section~\ref{sec:symmetry}. The following derivation gives their construction and the identities used to check them. In the first two subsections below, $p$ indexes an orbital and $k_p$ counts its free labels. In the third, $p$ is the coefficient vector of the projector.

\subsection{Finite support patterns}
A feature is empty, one oriented edge, or two distinct oriented edges. A permutation sorts each edge again and contributes its orientation sign. A feature pair acquires the product sign. Fix the anchors and stabilizer. If an element of the stabilizer reverses the sign, the entry is zero. Otherwise, normalize the signed orbital to $+1$ at a representative. Orbital supports are disjoint.

To enumerate the catalogue, classify endpoint equality patterns with anchors, endpoint permutations and orientations, reject loops and repeated edges, and identify signed isomorphisms. All nonzero types use at most seven labels. For two degree-one features this follows from their support and anchors, which use at most six. A global quadratic pair on eight labels consists of disjoint two-edge features. Swapping one edge's endpoints fixes the pair and reverses its sign. With an output anchor and at least eight labels, one feature has two disjoint edges, because two features with at most three vertices each and the anchor cover at most seven labels. At least three of its four vertices lie outside the other feature and the anchor, so one of its edges has both endpoints outside that set, and swapping them again reverses just one sign. Thus the one additional global and six additional anchored types vanish, and no new nonzero orbital appears above seven.

For an orbital $p$, let $a$ be the number of fixed anchors, $k_p$ its
number of free vertices, and $\nu _p$ its positive automorphism count. Let
$s=1$ for the unordered-edge stabilizer and $s=0$ otherwise. Then
\begin{equation}\label{eq:metricorb}
 D_p(n)=\tr(A_p^TA_p)=\frac{2^s(n-a)_{k_p}}{\nu _p},
 \qquad (z)_k=z(z-1)\cdots(z-k+1).
\end{equation}
Every factor is positive for $n\ge7$ because $a+k_p\le7$ for a
nonzero orbital. The signed transpose is
$A_p^T=\epsilon_pA_{\tau(p)}$, so self-adjoint coefficients obey
$s_{\tau(p)}=\epsilon_ps_p$. 

For multiplication, fix a representative of the output orbital $p$ with $v_p$ labels and sum over a middle feature. It introduces at most four new labels. For exactly $\ell$ new labels, count signed contributions on a canonical fresh $\ell$-set and multiply by $\binom{n-v_p}{\ell}$. The middle feature's two orientation signs cancel, so
\begin{equation}\label{eq:orbproduct}
 (A_iA_j)_p=\sum_{\ell=0}^{4}c_{pij\ell}
                  \binom{n-v_p}{\ell},
 \qquad c_{pij\ell}\in\mathbb Z.
\end{equation}
For every nonzero output orbital, $v_p\le7$. Thus \eqref{eq:orbproduct} applies at $n=7$, and direct enumeration at $n=7,\ldots,11$ identifies its degree-at-most-four polynomial. The checker also enumerates $n=12$ and tests associativity at $n=50000$. Direct maps and local moments depend only on abstract patterns. Their adjoints use the trace counts in \eqref{eq:metricorb}. These facts extend the reduction to every $n\ge7$, with binomial coefficients set to zero when too few labels are available.

\subsection{Independent representation counts}
For the label permutation representation $V_n$, $\mathcal F_1=\mathbf1\oplus\bigwedge^2V_n$ has global constituents $(n),(n-1,1),(n-2,1,1)$ once each. On $S_2\times S_{n-2}$ its multiplicities are $2,2,2,1,1$ for $(+,S^{(n-2)})$, $(-,S^{(n-2)})$, $(+,S^{(n-3,1)})$, $(-,S^{(n-3,1)})$, $(+,S^{(n-4,1,1)})$. Here $+$ and $-$ denote the trivial and sign representations of $S_2$, and $S^\lambda$ a real irreducible.

Quadratic edge pairs have three or four labels. Their representations
induce $S^{(3)}\oplus S^{(2,1)}$ from $S_3\times S_{n-3}$ and
$S^{(2,2)}\oplus S^{(1,1,1,1)}$ from $S_4\times S_{n-4}$,
with a trivial representation on the remaining labels.
Adding constant and linear features and applying the horizontal-strip
rule gives Table~\ref{tab:characters}. Restriction to one fixed label
removes a corner of each partition. The triangle kernel is the
three-subset permutation module with constituents
$(n),(n-1,1),(n-2,2),(n-3,3)$, each once. Subtracting it gives the
supported space $\mathcal R$. In the first two table columns $d=n$,
and in the last two $d=n-1$.
\begin{table}[tb]
\centering\small
\caption{Quadratic multiplicities}\label{tab:characters}
\begin{tabular}{lrrrr}\toprule
Partition&Global $\mathcal F_2$&Global $\mathcal R$&Fixed-label $\mathcal F_2$&Fixed-label $\mathcal R$\\\midrule
$(d)$&2&1&5&3\\
$(d-1,1)$&3&2&8&6\\
$(d-2,2)$&3&2&6&4\\
$(d-2,1,1)$&2&2&5&5\\
$(d-3,3)$&1&0&1&0\\
$(d-3,2,1)$&2&2&3&3\\
$(d-3,1,1,1)$&1&1&2&2\\
$(d-4,2,2)$&1&1&1&1\\
$(d-4,1,1,1,1)$&1&1&1&1\\\bottomrule
\end{tabular}
\end{table}
For multiplicities $m_\lambda$, the commutant and self-adjoint dimensions
are $\sum m_\lambda^2$ and $\sum m_\lambda(m_\lambda+1)/2$.
The first-layer counts are $(3,3)$ and $(14,11)$.
Formal quadratic counts $(34,25)$ and $(166,99)$ become $(20,16)$ and
$(101,63)$ on the support. Thus the supported self-adjoint dimensions 16
and 63 differ from the formal 25 and 99 in Table~\ref{tab:algebras}.
The archived character program independently recomputes them from signed
traces and exact character inner products.

\subsection{Physical pairing and the full reduced flows}
Denote the four algebras in Table~\ref{tab:algebras} by
$\mathcal A_{10},\mathcal A_{1e},\mathcal A_{20},\mathcal A_{2o}$,
in that order. Their diagonal trace metrics are
$D_{10},D_{1e},D_{20},D_{2o}$. The physical pairing of coefficient
vectors $s,x$ in the same algebra is $s^TDx$. Let $E_1$ and $E_2$
express a global operator in the edge-stabilizer and output-stabilizer
bases, respectively. Define the contractions
$C_0(S)=J_0^TSJ_0,\qquad C_e(S)=J_{\{1,2\}}^TSJ_{\{1,2\}}.$
These definitions determine all direct coefficient maps by finite
signed entries. Their summed weighted adjoints are
\begin{equation}\label{eq:adjoints}
\begin{aligned}
 T_1&=ND_{10}^{-1}E_1^TD_{1e},&
 T_0&=D_{20}^{-1}C_0^TD_{10},\\
 T_e&=ND_{20}^{-1}C_e^TD_{1e},&
 T_o&=nD_{20}^{-1}E_2^TD_{2o}.
\end{aligned}
\end{equation}
The factors $N=\binom n2$ and $n$ are the numbers of edge and output
cosets. They cannot be removed by rescaling the coefficient pairing.

Let $p$ be the global coefficient vector of the projector in
\eqref{eq:projector}, and $p_o=E_2p$. Use $x_0,x_e,y$ for the identity
query, edge-query, and output seed coefficients. With $*$ denoting
the product \eqref{eq:orbproduct}, the full reduced constraints are
\begin{equation}\label{eq:reducedflows}
\begin{aligned}
 x_0+T_1x_e&=(w_0,w_e,0),& w_0+Nw_e&=1,\\
 T_oy&=p*(T_0x_0+T_ex_e)*p,&p_o*y*p_o&=y.
\end{aligned}
\end{equation}
These equalities retain all three and all 34 flow coordinates. Nonnegative weights, signed self-adjointness and \eqref{eq:regular} impose positivity. Coset conjugation reconstructs \eqref{eq:primal}, and group averaging gives these seeds conversely, so the reduction preserves the optimized average.

For an output-anchored orbital $a$, let $s_a\le7$ be the size of its
support, with the output anchor included. Cancel doubled edges in the
product of its two features. Let $C_{a,\ell}$ be the signed sum over
local strict orders in which the output occupies position $\ell+1$.
The objective coefficient is
\begin{equation}\label{eq:momentformula}
 f_a(n,r)=\frac1{(n)_{s_a}}
 \sum_{\ell=0}^{s_a-1}C_{a,\ell}
 \sum_{j=\ell+1}^{s_a}\binom rj\binom{n-r}{s_a-j}.
\end{equation}
Indeed, the anchor is accepted exactly when at least $\ell+1$ of the uniformly assigned absolute ranks lie in $[r]$, and the relative order fixes the sign. The reduced objective is $nf(n,r)^TD_{2o}y$. At $r=1$, $nf_a(n,1)=C_{a,0}/(s_a-1)!$ is size-independent, as used in Appendix~\ref{app:uniform}.

\section{Exact certificates and finite recovery}\label{app:exact}
The rational systems in Table~\ref{tab:rootfiles} define the exact optima. Certificate indices are zero-based, whereas displayed output ranks are one-based. Here $\xi$ is the ordered root, $y$ its face coordinates, $G$ the evaluation metric, $\mathsf N$ a face basis, $\Lambda_b$ a kernel multiplier and $\eta$ the isolation residual. The five minimal-polynomial degrees have not been computed.

\begin{table}[tb]
\centering\small
\caption{Algebraic definitions}\label{tab:rootfiles}
\begin{tabular}{llrr}\toprule
Value & Certificate & Variables & Center denominator\\\midrule
$\vartheta_1$ & Seven-object minimum &38&$10^{80}$\\
$\vartheta_2$ & Seven-object top two &33&$10^{80}$\\
$\vartheta_3$ & Seven-object top three &34&$10^{80}$\\
$\vartheta_4$ & Seven-object top four &34&$10^{80}$\\
$\vartheta_8$ & Eight-object minimum &38&$10^{90}$\\\bottomrule
\end{tabular}
\end{table}

The verification archive specifies the rational polynomial systems, coordinate maps, face bases and isolating boxes for the seven-object minimum, the seven-object top-r values with $r=2,3,4$, and the eight-object minimum. Every box radius is $10^{-50}$. Integer polynomials, centers and preconditioners use the positive common denominators specified in each certificate. The certificate index gives the corresponding exact inputs and checking programs.

\subsection{Seven-object minimum}
Over all 5040 strict orders, $\mathsf V_1$ has the constant and 21 sign columns. Removing one quadratic feature per triangle gives 197 independent columns in $\mathsf V_2$, with exact maps
\begin{equation}
 D_\ell\mathsf V_1=\mathsf V_2\mathsf T_\ell,\qquad
 F_i^{(r)}=\frac1{5040}\mathsf V_2^T\Pi_i^{(r)}\mathsf V_2,
\end{equation}
where $D_\ell$ multiplies by $d_\ell$ and $\Pi_i^{(r)}$ indicates accepted orders. This basis represents the supported quotient with positive metric $G=\mathsf V_2^T\mathsf V_2$, so \eqref{eq:primal} and \eqref{eq:gap} apply.

The dual coordinates are $\boldsymbol{\theta}=(\theta ,\gamma,z_0,\ldots,z_{15})$, with
$H=\theta I+\gamma K_1$ and $Z=\sum_\ell z_\ell Z_\ell$. Here $K_1$
is the signed incidence form on edges with zero constant row, and the
$Z_\ell$ are a complete basis of invariant quadratic forms. Their explicit integer matrices and exact block congruences are supplied with the certificate.
For each identity-query, edge-query or output slack $S$, let $N_{\rm blk}$ be its number of full blocks. The supplied invertible matrix $\mathsf{V}_{\rm blk}$ satisfies the exact rational identity
\begin{equation}\label{eq:resolution7}
 \mathsf{V}_{\rm blk}^TS\mathsf{V}_{\rm blk}=\diag(\mathsf K_1,\ldots,\mathsf K_{N_{\rm blk}}),\qquad
 S=\sum_{b=1}^{N_{\rm blk}}W_b^T\mathsf K_bW_b,
\end{equation}
where $W_b$ is the corresponding row block of $\mathsf{V}_{\rm blk}^{-1}$. There are
16 blocks for each query operator and 61 output blocks. Every full-block
resolution weight in \eqref{eq:resolution7} is one. The certificate
identifies repeated blocks by exact congruences
$\mathsf K_b=s_bQ_b^T\mathsf K_{\mathrm{rep}(b)}Q_b$ with $s_b>0$.
Table~\ref{tab:sevenblocks} gives the full ranks. The seven-object boundary certificate specifies the eighteen representative blocks and their multiplicities. The exact checker reconstructs their individual ranks for each rank threshold.
\begin{table}[tb]
\centering\small
\caption{Seven-object slack ranks}\label{tab:sevenblocks}
\begin{tabular}{lrrr}\toprule
Value&Identity-query rank&Edge-query rank&Output rank\\\midrule
$\vartheta_1$&1&11&167\\
$\vartheta_2$&1&11&172\\
$\vartheta_3$&1&15&176\\
$\vartheta_4$&1&15&176\\\bottomrule
\end{tabular}
\end{table}
Put $\boldsymbol{\theta}=\mathsf Ny$, where the seven-object boundary certificate specifies the rational $18$ by $13$ face basis and its two gauges. This face makes six query pencils zero and fixes a one-dimensional kernel in another.
The remaining coordinates, kernel entries, and positive kernel
multipliers form the 38-variable system. If $\xi^{(1)}$ is its unique
root in the supplied box and $y$ is restored using the listed coordinate
map and gauges, the exact value is
\begin{equation}\label{eq:t1definition}
 \vartheta_1=\mathsf{N}_{0,:}y.
\end{equation}
This choice of face and gauge gives a matching primal--dual pair.

Six active pencils have size-rank pairs
$(2,1),(2,1),(3,2),(6,4),(4,3),(5,4)$. Writing them as in
\eqref{eq:kernelblocks} gives 27 kernel equations. Let
$Y_b=[K_b;I]$ and introduce a positive symmetric kernel multiplier
$\Lambda_b$. The stationarity polynomials are
\begin{equation}\label{eq:stationarity7}
 g_k=\mathsf{N}_{0k}-\sum_b
 \left\langle\frac{\partial S_b}{\partial y_k},
 Y_b\Lambda_bY_b^T\right\rangle.
\end{equation}
The gauge dependencies are $g_8-2g_9+g_{10}=0$ and $-4g_7-2g_8+4g_9+g_{12}=0$. Eleven stationarity equations and 27 kernel equations give the square system of degree at most three. Symmetry of $Y_b^TS_bY_b$ supplies the omitted lower entry of a two-column kernel equation.

The zero query blocks and persistent kernel add thirteen symmetric multiplier entries. Their $18$ by $13$ normal-column matrix is $C_\perp$. Fix eight entries $\mu_F$ and put $\boldsymbol r_F=\boldsymbol r-C_F\mu_F$, where $\boldsymbol r$ is the full objective gradient minus active contributions. Five independent columns $C_I$ and selected rows $E_J$ give
\begin{equation}\label{eq:stationarityrestore}
\begin{aligned}
 \mu_I&=(E_JC_I)^{-1}E_J\boldsymbol{r}_F,\\
 \boldsymbol{r}-C_\perp\mu
   &=\{I-C_I(E_JC_I)^{-1}E_J\}\boldsymbol{r}_F=\Pi g,\\
 \Pi&=\{I-C_I(E_JC_I)^{-1}E_J\}\mathsf{N}(\mathsf{N}^T\mathsf{N})^{-1},
 \qquad g=\mathsf{N}^T\boldsymbol{r}_F.
\end{aligned}
\end{equation}
The rank-five normal span is $\ker\mathsf N^T$, and $\mathsf N^TC_F=0$, so \eqref{eq:stationarityrestore} restores all eighteen coordinates when $g=0$. The other three seven-object certificates use the same construction. Their added kernels are positive on their supports. Undoing congruences and averaging over permutations gives positive primal matrices.

Normalization and the three-dimensional invariant first layer give the first flow. The second-flow residual $E$ is contravariantly invariant and orthogonal to all sixteen invariant quadratic forms. Since $GEG$ is covariantly invariant and
$\langle E,GEG\rangle
 =\tr\bigl((G^{1/2}EG^{1/2})^2\bigr)>0
 \quad\text{for }E\ne0,$
completeness gives $E=0$. Exact complementarity makes the gap zero. The first-query weights are strictly positive. The seven-object checker reconstructs these weights from the certificate seeds and verifies both complete flows. The same primal seed and exact-rank objectives give \eqref{eq:ranklaw}.

\subsection{Seven-object top-r values}
For $r=2,3,4$, let $\xi^{(r)}$ be the zero specified in
Table~\ref{tab:rootfiles}. Their exact value functionals are
\begin{align}
 \vartheta_2={}&\frac{13169719559234601}{723500000000000}
 +\frac1{1447}\bigl(-547290\xi_2+365310\xi_3+1003590\xi_4
 \nonumber\\
 &\quad+2013660\xi_5+525690\xi_6-382320\xi_7-262845\xi_8
       +12598\xi_9-1003590\xi_{10}\bigr),\label{eq:t2definition}\\
 \vartheta_r={}&4779\xi_1^{(r)}+999\xi_2^{(r)}+7776\xi_3^{(r)}
       +10692\xi_5^{(r)}+\xi_9^{(r)},\quad r=3,4.
       \label{eq:t34definition}
\end{align}
The unadorned $\xi$ in \eqref{eq:t2definition} denotes $\xi^{(2)}$.
The equal functional form for $r=3,4$ is evaluated at different roots.
All objective matrices use the accepted-rank indicator for the stated
$r$, with exactly $720r$ accepted strict orders per output label.

The $r=2$ system has 22 kernel equations and eleven independent
stationarity equations. For $r=3,4$, the systems have 23 kernel
equations, nine retained stationarity equations, and the two factor
equations
\begin{equation}\label{eq:factorrepair}
 u+15wv=0,\qquad
 \lambda_7-\frac{1134}{625}\lambda_{13}v^2=0.
\end{equation}
These factors follow from the three omitted variations of the full
Lagrangian, which is differentiated before the coordinate is fixed.
Write $w$ for the first kernel coordinate of block 7, and $z$ for the
unpermuted kernel column of block 13. Put $u=3z_1-z_4$ and
$v=z_2+z_3$. The raw pencils satisfy the exact variation identity
\begin{equation}\label{eq:variationrepair}
 \delta S_{13}=-\frac{4032}{5}L^T(\delta S_7)L,\qquad
 L=\begin{pmatrix}0&3&0&0&-1\\0&0&-15&-15&0\end{pmatrix}.
\end{equation}
The three matrices $\delta S_7$ span the symmetric $2$ by $2$
matrices. Let $y=(w,1)^T$, $z_*=(u,-15v)^T$, and let
$s_7,s_{13}>0$ be the chart scales. The derivative in each omitted
direction is $-\langle\delta S_7,M\rangle$, where
\begin{equation}
 M=s_7\lambda_7yy^T-\frac{4032}{5}s_{13}\lambda_{13}z_*z_*^T.
\end{equation}
With $f_1=u+15wv$ and
$f_2=\lambda_7-(1134/625)\lambda_{13}v^2$, exact expansion gives
\begin{equation}\label{eq:factororigin}
 M=s_7f_2yy^T-\frac{4032}{5}s_{13}\lambda_{13}f_1
   \begin{pmatrix}f_1-30wv&-15v\\-15v&0\end{pmatrix}.
\end{equation}
Thus the two factors in \eqref{eq:factorrepair} set all three omitted
derivatives to zero. The coefficient $1134/625$ comes from
$225(4032/5)s_{13}/s_7$ with the supplied chart scales. Genuine gauge dependencies restore the remaining face
directions. The positive normal lift then restores all eighteen dual
coordinates. 

Positivity of the required leading blocks, inactive blocks, and supported multiplier matrices on the whole root box of radius $10^{-50}$ is established by exact directed interval elimination.
The singular active slacks use \eqref{eq:singularpsd} at the root.
Seed contraction, covariance, and complete stationarity give both
causal equalities by the same invariant-metric argument used for
$\vartheta_1$. Each of the three pairs has zero gap separately. The complete coefficient identities and the factor equation maps are supplied in the chart files.

For $r=5,6,7$, a uniform triple minimum always has accepted rank,
because at most $n-r\le2$ objects are unaccepted, so every three-object subset contains an accepted object.
The two-comparison tree has rank law
$(15,10,6,3,1,0,0)/35$. Its success is one, which matches the trivial
probability upper bound. 

\subsection{Eight-object minimum}
For eight objects, the formal quadratic space has dimension 407 and
the supported quotient has dimension 351. There are 56 independent
triangle relations. Formula \eqref{eq:projector} becomes
\begin{equation}
 \Pi_{\rm ord}=I-\frac13\mathsf{U}_{\triangle}\mathsf{U}_{\triangle}^T+\frac1{177}\boldsymbol{v}_{\triangle}\boldsymbol{v}_{\triangle}^T,
 \qquad (\Pi_{\rm ord})_{00}=\frac3{59}.
\end{equation}
The exact value is
\begin{equation}\label{eq:t8definition}
 \vartheta_8=\frac12\xi_2^{(8)}+\frac76\xi_3^{(8)}+\frac43\xi_4^{(8)},
\end{equation}
where $\xi^{(8)}$ is the isolated 38-coordinate zero specified by the
eight-object certificate in the verification archive. That certificate fixes
the common polynomial denominator. The exact checker verifies the strict
48-digit rational enclosure.

The original dual has eighteen coordinates, and the selected zero
blocks impose five independent linear conditions. The eight-object boundary certificate specifies the resulting $18$ by $13$ face basis, gauges and ordered coordinates.
The same six size-rank pairs as for $\vartheta_1$ give 27 kernel equations.
Eleven full-Lagrangian stationarity equations complete the system.

The output blocks are compressions to projected Young images of
dimensions $(3,6,4,5,3,2,1,1)$, with positive resolution weights
$(1,6,14,15,35,20,21,15)$. The weighted dimensions sum to 351.
The supplied integer embeddings $\mathsf{E}_b$ have positive Grams $G_b$ and
satisfy an exact resolution
\begin{equation}\label{eq:resolution}
 \sum_b d_b\mathcal R_\mathcal{H}(\mathsf{E}_bG_b^{-1}\mathsf{E}_b^T)=\Pi_{\rm ord},
\end{equation}
where $\mathcal{H}$ fixes the output label and $\mathcal R_\mathcal{H}$ is its finite
group average. Each full projected image is invariant under a
supported self-adjoint operator commuting with $\mathcal{H}$. It is therefore
reducing, and \eqref{eq:resolution} reconstructs that operator from
its compressed blocks with positive weights. This proves that the
block positivity tests imply full supported positivity. Query blocks
have the analogous resolution of the identity.

The normal lift restores all eighteen stationarity coordinates, and
the reconstructed primal has weights
$w_0=34852813/10^8$ and $w_e=65147187/(28\cdot10^8)$,
where $w_0$ and the two gauges are local free choices inside the optimal face.
All three coordinates of the first-flow residual and all 34 coordinates of the second-flow residual are zero.
Exact complementarity establishes equality in \eqref{eq:gap}.
Consequently \eqref{eq:t8definition} is both an upper bound for every
admitted protocol and an attainable value.

\subsection{Numerical enclosures}
For $\vartheta_2$, coordinates $\xi_0,\ldots,\xi_{10}$ are $y_0,y_1,y_2,y_3,y_4,y_5,y_6,y_9,y_{10},y_{11},y_{12}$, in that order. For $\vartheta_3$ and $\vartheta_4$, they are $y_1,y_2,y_3,y_4,y_5,y_6,y_7,y_{10},y_{11},y_{12},y_{13}$ after the first coordinate is fixed and removed. The remaining coordinates are the kernel and symmetric multiplier entries in the ordered active-block list of each chart. For $\vartheta_8$, the eleven free face coordinates precede the nineteen kernel entries and eight multiplier entries specified by the eight-object boundary certificate. These dictionaries specify the coordinates in \eqref{eq:t2definition}, \eqref{eq:t34definition} and \eqref{eq:t8definition}.

Table~\ref{tab:longenclosures} gives outward decimal endpoints of strict
rational enclosures. The first four come from the supplied root boxes.
For $c_W$, a refinement of the tail system at $t=0$ gives a box of
radius $10^{-50}$ inside the existing nominal endpoint tube. Its exact
contraction bound is below $3\cdot10^{-49}$, so it isolates the same
root. The endpoint certificate in the verification archive supplies the extra
rational center and preconditioner. Its exact checker verifies these bounds.
\begin{table}[tb]
\centering\scriptsize
\caption{Strict algebraic enclosures}\label{tab:longenclosures}
\begin{tabular}{lrr}\toprule
Value&Lower endpoint&Upper endpoint\\\midrule
$\vartheta_1$&0.7166324589220098567409774509252509725695&0.7166324589220098567409774509252509725696\\
$\vartheta_2$&0.9222141983202458865353864960131301315266&0.9222141983202458865353864960131301315267\\
$\vartheta_3$&0.9805074412250647003163731992802528175364&0.9805074412250647003163731992802528175365\\
$\vartheta_4$&0.9965927518837980531425613637828416849287&0.9965927518837980531425613637828416849288\\
$c_W$&7.4370343126460654630868806012845718553825&7.4370343126460654630868806012845718553826\\
\bottomrule\end{tabular}
\end{table}

\subsection{Root isolation and physical realization}
For rational $f$, center $c$, preconditioner $R$ and $\rho_{\rm box}=10^{-50}$, a Krawczyk-type interval-Newton argument~\cite{rump} bounds
$\eta=\|Rf(c)\|_\infty,\qquad
 \kappa=\sup_{\|\xi -c\|_\infty\le \rho_{\rm box}}
          \|I-Rf'(\xi )\|_\infty.$
Exact rational bounds give $\kappa<1$ and $\eta+\kappa\rho_{\rm box}<\rho_{\rm box}$. Therefore $\xi\mapsto\xi-Rf(\xi)$ contracts the closed box strictly into itself. The derivative bound makes $R$ and $f'(\xi)$ nonsingular, so its unique fixed point is a nonsingular zero. Whole-box Hessian or directed-interval bounds prove these inequalities, while positive interval pivots or principal minors prove positivity.

For a matching primal, pad the equal-Gram factors $\mathsf C_{\rm qry},\mathsf C_{\rm out}$ to a common row dimension. With the Moore--Penrose inverse, set $\mathsf V=\mathsf C_{\rm out}\mathsf C_{\rm qry}^+$, $\Pi_{\rm qry}=\mathsf V^T\mathsf V$ and $\Pi_{\rm out}=\mathsf V\mathsf V^T$. Equal Grams make $\mathsf V$ a partial isometry, so
\begin{equation}\label{eq:unitaryextension}
 \mathsf{U}(\mathsf{C}_{\rm qry},\mathsf{C}_{\rm out})=\begin{pmatrix}\mathsf{V}&I-\Pi_{\rm out}\\I-\Pi_{\rm qry}&-\mathsf{V}^T\end{pmatrix}
\end{equation}
is orthogonal and sends $(\mathsf{C}_{{\rm qry},x},0)$ to $(\mathsf{C}_{{\rm out},x},0)$ for every order. Extend the output-label measurement to unused directions and use the charged phase encoding. This proves finite ordinary-XOR attainability. The eight-object recovery certificate specifies the dimensions and output-projector ranks. The corresponding dense unitaries are not explicitly constructed or simulated.

\section{Physical lift and the family upper certificate}\label{app:family}
We derive constraints satisfied by every allowed family circuit. Appendix~\ref{app:branches} attains their upper bound with finite operations. Here $p$ is the dual bound, $L,\boldsymbol Q$ are transported columns, $X,Y$ group coordinates and $F$ is the success numerator.

\subsection{Expansion of the pair-register circuit}
Put
$m=n-1,\quad a_G=2(n-2),\quad b_G=2n,\quad
 \kappa_0=2(n+1)/3,\quad \mu_2=(n^2-1)/3,
 \quad \zeta=2(n-1)(n-2)/3.$
The label permutation representation is $V_n=\mathbf1\oplus S^{(n-1,1)}$. Its tensor square is the pair-matrix representation. The identity $S^{(n-1,1)}\otimes S^{(n-1,1)}=S^{(n)}\oplus S^{(n-1,1)}\oplus S^{(n-2,2)}\oplus S^{(n-2,1,1)}$ gives the decomposition in Section~\ref{sec:family} for every $n\ge4$. The constituent dimensions sum to $2+3(n-1)+n(n-3)/2+(n-1)(n-2)/2=n^2$. All symmetric-group irreducibles here have absolutely irreducible real orthogonal models, so their multiplicity spaces give the stated orthogonal commutant. Independent exact character calculations at $n=7,8,9,12$ check the formula without replacing this tensor-product argument.

Write $\mathcal T_U,\mathcal T_V$ for the scalar Cayley blocks and $T_U,T_V$ for the
standard blocks, with their signs applied in the prescribed order.
Starting from the unnormalized $aI+b(J-I)$, define
\begin{equation}\label{eq:physicalcoords}
 \binom\alpha\beta=\mathcal T_U\binom a0,\qquad
 \gamma=\frac bn(T_U\hat e-\sigma_U\hat e),\qquad s=\sigma_Ub,
 \quad \hat e=(0,0,1)^T.
\end{equation}
Then set
\begin{equation}\label{eq:outputcoords}
\begin{aligned}
 \binom{A}{b_{\rm out}}&=\mathcal T_V\binom\alpha{s+\kappa_0\gamma_2},&
 L&=T_V(\gamma_0,0,\beta/n)^T,\\
 \boldsymbol{Q}&=T_V(0,\gamma_2/a_G,3\gamma_1/b_G)^T,&
 (c,g_0,g_1,g_2)&=(\sigma_V\beta,\gamma_0,\sigma_V\gamma_1,\gamma_2).
\end{aligned}
\end{equation}
Both columns use the same $T_V$. Their cross inner product is
therefore constrained together with their two norms.

At the canonical order $x_i=i$, put $d_i=2i-n-1$, $h_i=d_i^2-\mu_2$. Off the diagonal define $E_{ij}=\operatorname{sgn}(i-j)$, $K^{\rm ord}_{ij}=E_{ij}(d_i+d_j)$ and $Z_{ij}=|d_i-d_j|$, all zero on the diagonal. Their row sums are $d_i$, $3h_i/2$ and $m\kappa_0+h_i/2$, respectively. Hence
$\mathscr E=E-T(d/n),\quad
 \mathscr K=K^{\rm ord}-T(3h/(2n)),\quad
 \mathscr Z=Z-\kappa_0(J-I)-S(h/a_G)$
have zero row and column sums. The first two are antisymmetric and the last symmetric. The first query gives $aI+bE$, with $bE=T(bd/n)+b\mathscr E$. Apply the scalar, standard and residual actions of $U$ using \eqref{eq:physicalcoords}. The second query preserves the diagonal and multiplies each off-diagonal entry by $E_{ij}$. Since $E_{ij}^2=1$, $E_{ij}(d_i+d_j)=K^{\rm ord}_{ij}$ and $E_{ij}(d_i-d_j)=Z_{ij}$, the queried states are
\begin{align}
 U\mathcal D_x(aI+b(J-I))
  & =\alpha I+\beta(J-I)+sE+\diag(\gamma_0d)
      +S(\gamma_1d)+T(\gamma_2d),\\
 \mathcal D_xU\mathcal D_x(aI+b(J-I))
  & =\alpha I+s(J-I)+\beta E+\gamma_0\diag(d)
      +\gamma_1K^{\rm ord}+\gamma_2Z.
\end{align}
The second line has scalar input $(\alpha,s+\kappa_0\gamma_2)$, standard columns $(\gamma_0,0,\beta/n)^T$ on $d$ and $(0,\gamma_2/a_G,3\gamma_1/b_G)^T$ on $h$, and residuals $\beta\mathscr E+\gamma_1\mathscr K+\gamma_2\mathscr Z$. Thus $V$ gives \eqref{eq:outputcoords} and
\begin{equation}\label{eq:fullreadout}
\begin{aligned}
 M={}&AI+b_{\rm out}(J-I)+\diag(L_0d+\boldsymbol{Q}_0h)
       +S(L_1d+\boldsymbol{Q}_1h)+T(L_2d+\boldsymbol{Q}_2h)\\
    &+g_2\mathscr Z+c\mathscr E+g_1\mathscr K.
\end{aligned}
\end{equation}

Metric preservation in \eqref{eq:physicalcoords} and
\eqref{eq:outputcoords} gives five quadratic equalities
\begin{equation}\label{eq:five}
\begin{aligned}
 e_1&=g_0^2+a_Gg_1^2+b_Gg_2^2+4sg_2=0,\\
 e_2&=A^2+mb_{\rm out}^2-\alpha^2-m(s+\kappa_0g_2)^2=0,\\
 e_3&=L^TG_3L-g_0^2-2c^2/n=0,\\
 e_4&=\boldsymbol{Q}^TG_3\boldsymbol{Q}-g_2^2/a_G-9g_1^2/b_G=0,\\
 e_5&=L^TG_3\boldsymbol{Q}-3cg_1/n=0.
\end{aligned}
\end{equation}
The first follows by expanding $\|\gamma+(s/n)\hat e\|_{G_3}^2=2s^2/n$, which is an identity for all $b$ and $s$. The second is the scalar $V$ norm equality. The last three
are the entries of one preserved two-column Gram matrix. Since
$\sigma_V^2=1$, both the norm and cross terms have the signs in
\eqref{eq:five}. The common squared norm is
\begin{equation}\label{eq:familyh}
 \mathcal{N}=n(a^2+mb^2)=n(\alpha^2+mc^2+ms^2)>0.
\end{equation}

\subsection{Minimum-row readout}
Put $Y=(A,L_0,\boldsymbol{Q}_0)^T$, $X=(c,g_1,g_2,b_{\rm out},L_1,\boldsymbol{Q}_1,L_2,\boldsymbol{Q}_2)^T$,
$\boldsymbol{b}=(1,-m,\zeta)^T$, and
\begin{equation}\label{eq:readoutR}
 R=\begin{pmatrix}
 -1/n&2m/n&-4/3&1&-m&\zeta&-m&\zeta\\
 1/n&-1&1&0&1&0&-1&0\\
 0&3/(2n)&-1/a_G&0&0&1&0&-1
 \end{pmatrix}.
\end{equation}
Let $\omega_4=4n(n^2-1)(n^2-4)/45$ and
\begin{equation}\label{eq:familyH}
 H_n=\begin{pmatrix}
 m&m&-\zeta\\
 m&n\mu_2-m^2&m\zeta\\
 -\zeta&m\zeta&\omega_4-\zeta^2
 \end{pmatrix}.
\end{equation}
At the minimum row, $M_{11}=\boldsymbol{b}^TY$ and
$M_{1j}=\rho_0+\rho_1d_j+\rho_2h_j$ for $j>1$, with $\rho=RX$.
The centered-power sums give
$H_n=\sum_{j=2}^n(1,d_j,h_j)^T(1,d_j,h_j)$.
It follows that the exact minimum success is $F/\mathcal{N}$, where
\begin{equation}\label{eq:familyF}
 F=(\boldsymbol{b}^TY)^2+X^TR^TH_nRX.
\end{equation}
 For rational circuit parameters, this formula is rational.

\subsection{A global quadratic identity}
Let $p,m_1,m_2,m_4,m_5$ be real certificate parameters and define
\begin{equation}
 J_m=\begin{pmatrix}m_1&m_5/2\\m_5/2&m_4\end{pmatrix},\quad
 D_m=\diag(m_2,J_m)-\boldsymbol{b}\boldsymbol{b}^T,
\end{equation}
\begin{equation}
 T_m=\begin{pmatrix}
 pnm-2m_1/n&-3m_5/(2n)\\
 -3m_5/(2n)&a_Gm_1-9m_4/b_G
 \end{pmatrix},\qquad
 d_m=b_Gm_1-m\kappa_0^2m_2-m_4/a_G,
\end{equation}
$M_8=\diag(T_m,d_m,m\,m_2,a_GJ_m,b_GJ_m)-R^TH_nR.$
With $e_z=(0,0,1,0,0,0,0,0)^T$, put
\begin{equation}\label{eq:B9}
 \mathsf S_9=\begin{pmatrix}
 m(np-m_2)&(2m_1-m\kappa_0m_2)e_z^T\\
 (2m_1-m\kappa_0m_2)e_z&M_8
 \end{pmatrix}.
\end{equation}
Direct expansion in all fourteen free coordinates gives
\begin{equation}\label{eq:globalidentity}
\begin{aligned}
 p\mathcal{N}-F+m_1(e_1+e_3)+m_2e_2+m_4e_4+m_5e_5
 ={}&(np-m_2)\alpha^2+Y^TD_mY\\
 &+(s,X)^T\mathsf S_9(s,X).
\end{aligned}
\end{equation}
Therefore $np-m_2\ge0$, $D_m\succeq0$, and $\mathsf S_9\succeq0$
imply $F/\mathcal{N}\le p$ on every physical circuit. The identity holds in unconstrained coordinates for all parameter values. The next appendix supplies its signs and a matching finite circuit uniformly in $n$.

\section{Algebraic branches and finite family attainments}\label{app:branches}
Two rational systems define the continuous family roots. The family certificate in the verification archive specifies their predictors, preconditioners and parameter-cell boxes. Here $k$ is a local scale, $G$ a squared amplitude, $\kappa_{\rm ctr}$ the contraction bound and $\mathbf v_{\rm tar}$ a target column.

\subsection{Two reduced systems}
Use $t=1/n$, $\beta_t=1-t$, $d=1-2t$, $k=2(1+t)/3$, and
$\bar z=2\beta_td/3$. Retain $m,a_G,b_G,\kappa_0$ from
Appendix~\ref{app:family}. The following scalings are exact,
\begin{equation}\label{eq:scalings}
\begin{aligned}
 s&=S/t,&c&=\chi /t,&g_1&=j,&g_2&=z,\\
 L_0&=\sqrt m E,&\boldsymbol{Q}_0&=\sqrt m\,t\upsilon ,&L_1&=\xi_X,&\boldsymbol{Q}_1&=t\xi_Y,\\
 L_2&=-d\xi_X,&\boldsymbol{Q}_2&=-dt\xi_Y,&A&=\sqrt m\,a_0/t,\\
 p&=t\pi ,&m_1&=M_1/t^2,&m_4&=M_4/t^4,&m_5&=M_5/t^3.
\end{aligned}
\end{equation}
The scaled root coordinates below are local to this appendix. Define
\begin{equation}\label{eq:scaledconstraints}
\begin{aligned}
 G_c&=\beta_tE^2+4\beta_td\xi_X^2-2\chi ^2,\\
 q_1&=G_c+2dj^2+2z^2+4Sz,\\
 q_4&=\beta_t\upsilon ^2+4\beta_td\xi_Y^2-z^2/(2d)-9j^2/2,\\
 q_5&=\beta_tE\upsilon +4\beta_td\xi_X\xi_Y-3\chi j.
\end{aligned}
\end{equation}
Also put
\begin{equation}\label{eq:sigmavector}
 \sigma=\begin{pmatrix}
 -\chi +2\beta_tj-4z/3-2\beta_t\xi_X+4\beta_td\xi_Y/3\\
 \chi -j+z+2\beta_t\xi_X\\
 3j/2-z/(2d)+2\beta_t\xi_Y
 \end{pmatrix},\qquad
 H_c=t^3\diag(1,1,t)H_{1/t}\diag(1,1,t).
\end{equation}
After rational cancellation, $H_c$ is polynomial in $t$ and is
defined at $t=0$. Thus the systems below have finite endpoint values.

For the tail, use $v_T=(S,\chi ,j,z,E,\upsilon ,\xi_X,\xi_Y)$ and set
\begin{equation}\label{eq:tailsystem}
\begin{aligned}
 a_0&=-S-kz,& \mathcal{N}_T&=\beta_t(S^2+\chi ^2),\\
 f_T&=\beta_t(-S-kz-\beta_tE+\bar z\upsilon )^2+\sigma^TH_c\sigma,\\
 \mathcal L_T&=\pi \mathcal{N}_T-f_T+M_1q_1+M_4q_4+M_5q_5.
\end{aligned}
\end{equation}
Take the eight coordinates of
$\tfrac12\nabla^2_{v_T}\mathcal L_T\,v_T$, append $q_1,q_4,q_5$,
multiply each equation by the positive $d$, and then set $z=1$.
This defines eleven equations in
$(\pi ,M_1,M_4,M_5,S,\chi ,j,E,\upsilon ,\xi_X,\xi_Y)$. The $z$ stationarity row is
retained before fixing its coordinate. The selected root defines
$P_T(t)$ on $[0,1/23]$.

For the head, use $v_H=(a_0,\chi ,j,z,E,\upsilon ,\xi_X,\xi_Y)$ and set
\begin{equation}\label{eq:headsystem}
\begin{aligned}
 \mathcal{N}_H&=\beta_t\{a_0^2+\chi ^2+(k/2)(G_c+2dj^2+2z^2)-k^2z^2\},\\
 f_H&=\beta_t(a_0-\beta_tE+\bar z\upsilon )^2+\sigma^TH_c\sigma,\\
 \mathcal L_H&=\pi \mathcal{N}_H-f_H+M_4q_4+M_5q_5.
\end{aligned}
\end{equation}
The eight coordinates of
$\tfrac12\nabla^2_{v_H}\mathcal L_H\,v_H$, together with $q_4,q_5$,
are multiplied by $d$ before setting $z=1$. The result is ten
equations in $(\pi ,M_4,M_5,a_0,\chi ,j,E,\upsilon ,\xi_X,\xi_Y)$. Their selected root
defines $P_H(t)$ on $[1/24,1/7]$.
The following identity relates the reduced-system norm to the norm of the physical realization. For a scaled square $A_\alpha$,
\begin{equation}\label{eq:normidentity}
 \beta_t(A_\alpha+\chi ^2+S^2)+\frac{1-t^2}{3}q_1
 +\beta_t\{a_0^2-A_\alpha-(S+kz)^2\}=\mathcal{N}_H.
\end{equation}
Once the constraints and nonnegative squares are restored, $\mathcal{N}_H$
is the scaled norm from \eqref{eq:familyh}.

\subsection{Continuous root selection}
The data divide the two parameter domains into the cells
\begin{equation}\label{eq:cells}
\begin{aligned}
 I_i^T&=[i/1840,(i+1)/1840],&&0\le i<80,\\
 I_i^H&=[1/24+17i/26880,1/24+17(i+1)/26880],&&0\le i<160.
\end{aligned}
\end{equation}
Each cell has a quadratic predictor $a(t)$ with denominator $10^{40}$, a rational preconditioner $R_{\rm pre}$ and nominal radius $R_0=10^{-7}$. For either system $\Phi(t,\xi)$, whole-cell bounds give
\begin{equation}\label{eq:tube}
 \kappa_{\rm ctr}=\sup\|I-R_{\rm pre}J_\Phi\|_\infty<13/1000,\qquad
 \varepsilon_0+\kappa_{\rm ctr}R_0<R_0,\qquad
 \frac{\varepsilon_0}{1-\kappa_{\rm ctr}}<10^{-9},
\end{equation}
where $\varepsilon_0=\sup\|R_{\rm pre}\Phi(t,a(t))\|_\infty$. The Jacobian bound covers the cell and nominal $10^{-7}$ tube, and the residual bound covers its predictor. The last inequality gives the posterior radius. Exact Bernstein bounds on $[0,1]$ and outward rational intervals enclose the polynomials and remaining operations.

For each $t$, contraction gives one nonsingular zero. At shared endpoints,
$\|a_i-a_{i+1}\|_\infty+\varepsilon_i/(1-\kappa_i)<10^{-7}$ places one posterior enclosure inside the neighboring nominal tube. Uniqueness joins adjacent roots into continuous branches, and nonsingularity gives the implicit function theorem. To bound derivatives, write the exact quadratic Taylor expansion about $a(t)$, with constant preconditioner $R$, $e=x-a$, Hessian $H(t)$, $\|e\|_\infty\le h$ and contraction $\kappa_c<1$. Differentiate along the predictor and use the Neumann inverse bound to obtain
\begin{equation}\label{eq:rootderivativebound}
\|e'\|_\infty\le
\frac{\|(R\Phi(t,a))'\|_\infty+h\|(R\Phi_x(t,a))'\|_\infty
+\tfrac12h^2\|(RH(t))'\|_\infty}{1-\kappa_c}.
\end{equation}
Whole-cell Bernstein bounds enclose the true-root derivative. The nominal radius suffices here, while positivity uses the posterior radius.

\subsection{The branch transition}
On the tail, with $z=1$, define
\begin{equation}
 a_T=-S-k,\qquad m_{2,T}=\frac{a_T-\beta_tE+\bar z\upsilon }{a_T},
 \qquad \gamma_T=\pi -m_{2,T},\qquad G_T=-2dj^2-2-4S.
\end{equation}
On the head, define
$G_H=G_c,\quad S_H=-(G_H+2dj^2+2)/4,\quad
 \Delta_H=a_0^2-(S_H+k)^2,
 \quad M_{1,H}=\pi (1-t^2)/3,\quad m_{2,H}=\pi .$
On the selected true-root branches over the whole tail interval
$[0,1/23]$ and whole head interval $[1/24,1/7]$, respectively,
the certified derivatives satisfy
\begin{equation}\label{eq:phasemonotone}
 -10<\gamma_T'(t)<-47/10,\qquad
 8/5<\Delta_H'(t)<19/5.
\end{equation}
The tail and head endpoint values satisfy
\begin{equation}
\begin{aligned}
 \gamma_T(1/24)&\in[0.008818369991,0.008818370915],\\
 \gamma_T(1/23)&\in[-0.000307992259,-0.000307967287],\\
 \Delta_H(1/24)&\in[-0.006507424333,-0.006507417716],\\
 \Delta_H(1/23)&\in[0.000229593416,0.000229595934].
\end{aligned}
\end{equation}
At $434161/10^7$ and $434162/10^7$, the tail slack lies in $(10^{-7},4\cdot10^{-7})$ and $(-4\cdot10^{-7},-10^{-7})$, respectively. These signs and its strict derivative locate one zero. To identify the head at that zero, use tail stationarity,
$\beta_t\gamma_TS+2M_1-\beta_tkm_{2,T}=0.$
At $\gamma_T=0$ this gives the head multipliers. Substituting $a_0=-S-kz$ gives $\mathcal N_H=\mathcal N_T+(1-t^2)q_1/3$ and $f_H=f_T$, with $q_4,q_5$ unchanged, so the tail root solves the head system. On the switch interval it is within $1/500000$ of the head predictor. An enlarged tube of radius $1/10000$ has contraction below $11/1000$, so uniqueness identifies the selected head root and $q_1=0$ gives $\Delta_H=0$. The derivatives in \eqref{eq:phasemonotone} imply $\gamma_T\ge0$ on $(0,\tau]$ and $\Delta_H\ge0$ on $[\tau,1/7]$. Appending $a_T\gamma_T=0$ gives a nonsingular augmented Jacobian because $a_T\gamma_T'(\tau)\ne0$, so the transition is algebraic. Since $23<\tau^{-1}<24$, the integer split is \eqref{eq:familyvalue}.

\subsection{Full-space positivity}
We next connect the selected roots to the global identity
\eqref{eq:globalidentity}. For $t>0$, apply the invertible
congruences
\begin{equation}
D_{\rm sc}=\diag(1,t,t^2)D_m\diag(1,t,t^2),\qquad
 \mathsf S_{\rm sc}=t^3W\mathsf S_9W,
 \quad W=\diag(1/t,1/t,1,1,1/t,1,t,1,t).
\end{equation}
Write $\gamma=\pi-m_2$ for the scalar slack, so that $\gamma=\gamma_T$ on the tail and $\gamma=0$ on the head. Let $\mathsf{C}_8$ be the lower eight-dimensional block of $\mathsf S_{\rm sc}$ and put
$R_8=\mathsf{C}_8-\beta_t\gamma S^2e_ze_z^T$. The stationarity equation
gives the division-free decomposition
\begin{equation}\label{eq:rankchange}
 (u_s,u_{\rm off})^T\mathsf S_{\rm sc}(u_s,u_{\rm off})
 =\beta_t\gamma(u_s-Su_z)^2+u_{\rm off}^TR_8u_{\rm off}.
\end{equation}
This identity remains valid at $\gamma=0$.

Before setting $z=1$, the tail choice $m_2=(a_0-\beta_tE+\bar z\upsilon)/a_0$ makes the first entry of $D_{\rm sc}(a_0,E,\upsilon)^T$ zero. Its other entries are $(\partial_E\mathcal L_T,\partial_{\upsilon}\mathcal L_T)/(2\beta_t)$. On the head, $m_2=\pi$ and $M_1=(1-t^2)\pi/3$ give all entries as $(\partial_{a_0}\mathcal L_H,\partial_E\mathcal L_H,\partial_{\upsilon}\mathcal L_H)/(2\beta_t)$. Thus stationarity gives a kernel with positive last coordinate. The leading two-dimensional block is positive on every posterior root enclosure of radius below $10^{-9}$, so the Schur complement vanishes and $D_{\rm sc}\succeq0$ has rank two.

The complete stationary equations restore the row kernel
$u_0=(\chi ,j,1,0,\xi_X,\xi_Y,-d\xi_X,-d\xi_Y)^T.$
Write $r=R_8u_0$. Let $\sigma_0$, $\sigma_1$ and $\sigma_2$ be the first, second and third entries of $\sigma$ in \eqref{eq:sigmavector}. Direct expansion gives $r_4=0$ from $\sigma_0+\sigma_1-(2d/3)\sigma_2=0$ and the first row $t^2\beta_t(1,1,-2d/3)$ of $H_c$. Paired symmetric and antisymmetric metric terms give $r_5+r_7=0$ and $r_6+r_8=0$. The other entries are
\[
 r_1=\tfrac12\partial_\chi\mathcal L,\quad
 r_2=\tfrac12\partial_j\mathcal L,\quad
 r_5=\frac{\partial_{\xi_X}\mathcal L}{4\beta_t},\quad
 r_6=\frac{\partial_{\xi_Y}\mathcal L}{4\beta_t}.
\]
For the selected Lagrangian $\mathcal L$, $r_3=(\partial_z\mathcal L_T-S\partial_S\mathcal L_T)/2$ on the tail and $r_3=\partial_z\mathcal L_H/2$ on the head, before substituting $z=1$. Stationarity sets these five entries to zero, and the structural identities set the other three to zero. The programs verify the rational identities.

A second kernel follows from
\begin{equation}\label{eq:secondkernel}
\begin{aligned}
 \mathsf S_9(0,0,0,0,2A,L_0,\boldsymbol{Q}_0,L_0,\boldsymbol{Q}_0)^T
 ={}&(0,0,0,0,2m d_A,\\
     &a_Gd_L,a_Gd_{\boldsymbol Q},b_Gd_L,b_Gd_{\boldsymbol Q})^T,
\end{aligned}
\end{equation}
where $(d_A,d_L,d_{\boldsymbol Q})^T=D_m(A,L_0,\boldsymbol{Q}_0)^T$.
After the scaling congruence, multiplication by $R_8$ of the vector below gives $(0,0,0,\beta_t d_0/a_0,d d_E/a_0,d d_{\upsilon}/a_0,d_E/a_0,d_{\upsilon}/a_0)^T$, where $(d_0,d_E,d_{\upsilon})^T=D_{\rm sc}(a_0,E,\upsilon)^T$. The rank-one correction in $R_8$ does not affect this vector because its $z$ coordinate is zero. The diagonal kernel therefore gives $R_8u_p=0$ with
\begin{equation}
u_p=(0,0,0,1,E/(2a_0),\upsilon /(2a_0),E/(2a_0),\upsilon /(2a_0))^T.
\end{equation}
The last two coordinates of $u_0,u_p$ have determinant
$-d(\xi_X\upsilon -\xi_YE)/(2a_0)\ne0$. The leading six-dimensional block of $R_8$ is positive on those same
posterior root enclosures, so its remaining two-dimensional Schur
complement is exactly zero at the root. Thus $R_8\succeq0$ with rank six.

The tail has $\gamma_T\ge0$ on its selected domain and the head has $\gamma=0$. Hence \eqref{eq:rankchange} gives $\mathsf S_9\succeq0$, with $np-m_2=\gamma\ge0$. Since all sign tuples satisfy the forward equalities, \eqref{eq:globalidentity} bounds every family circuit.

\subsection{Finite realization of the family optimum}
The sixteen charts cover the commutant's success probabilities up to closure. The first query has zero symmetric residual, so its sign in $U$ is irrelevant. A global sign fixes that of $V$ without affecting probabilities. A scalar rotation can match any orthogonal map on its one input vector, and finite Cayley rotations are dense. The standard Cayley block and $\eta$ cover both components of $O(3)$ up to closure, and $\sigma$ supplies the antisymmetric sign.

Restore the coordinates by \eqref{eq:scalings}, take $b_{\rm out}=0$, $z=1$,
and set
$g_0=\sqrt{G/t},\qquad \alpha=\sqrt m\sqrt\Delta/t,$
where $(G,\Delta)=(G_T,0)$ on the tail and
$(G_H,\Delta_H)$ on the head. All five equations in
\eqref{eq:five} hold. The squared norm is
$\mathcal{N}=t^{-4}\beta_t(\Delta+\chi ^2+S^2)>0.$
Multiplying the eight stationary equations by their coordinates gives
$\pi \mathcal{N}_T=f_T$ or $\pi \mathcal{N}_H=f_H$. Therefore $F/\mathcal{N}=t\pi $.

It remains to realize that lifted point with finite allowed operations.
Let $a_{\rm in}=\sqrt{\alpha^2+mc^2}$ and $b_{\rm in}=s$.
In the unnormalized $z=1$ gauge, choose
\begin{equation}\label{eq:finiteparameters}
 u_2=-\frac{mc}{a_{\rm in}+\alpha},\qquad
 v_2=\frac{m(s+\kappa_0)}{A+\alpha},\qquad
 (u_{01},u_{02},u_{12})=(0,g_0,a_Gg_1).
\end{equation}
On the posterior root enclosures of radius below $10^{-9}$ in every
tail and head cell, the exact checks give
\begin{equation}\label{eq:attainmentinequalities}
\begin{gathered}
 G>17/1000,\quad a_0>41/100,\quad \chi >74/100,\quad
 \upsilon >238/100,\quad \xi_X>20/100,\\
 \pi >0,\quad m_2>0,\quad S<0,\quad j<0,\quad
 S+k<-19/100,\quad \xi_X\upsilon -\xi_YE>41/100.
\end{gathered}
\end{equation}
These bounds give $G>0$, $s<0$ and positive $a_{\rm in}+\alpha,A+\alpha$ on each branch. The transition proof gives $\Delta\ge0$ on selected domains, so $\alpha$ is real for $t>0$. Since $d\ge5/7$ and $\beta_t\ge6/7$, all rational denominators are positive. The parameters use $\eta_U=-1$ before the Cayley factor. At $\alpha=0$ both scalar parameters are $-\sqrt m$, hence finite.

The whitened, rescaled source and target columns are
\begin{equation}
\begin{aligned}
 f&=(\sqrt G,0,\sqrt2\chi )^T,&
 g&=(0,1/\sqrt{2d},3j/\sqrt2)^T,\\
 \ell&=(\sqrt{\beta_t}E,\sqrt{2d}\,\xi_X,-\sqrt{2}\,d\xi_X)^T,&
 \mathbf v_{\rm tar}&=(\sqrt{\beta_t}\upsilon ,\sqrt{2d}\,\xi_Y,-\sqrt{2}\,d\xi_Y)^T.
\end{aligned}
\end{equation}
The full two-column Gram equality gives the proper orthogonal map
$O_V=[\ell,\mathbf v_{\rm tar},\ell\times \mathbf v_{\rm tar}]\,[f,g,f\times g]^{-1}.$
The posterior enclosures give a common Gram determinant above $65/100$ and $1+\tr O_V>38/100$. Since $\det(I+O_V)=2(1+\tr O_V)$ for a proper orthogonal matrix, its inverse Cayley chart is finite. Set
$T_V=G_3^{-1/2}O_VG_3^{1/2},\qquad
 K_V=G_3(T_V-I)(T_V+I)^{-1}.$
The skew entries of this shared map give the final standard parameters with $\eta_V=+1$. Both residual signs are $+1$. Normalize by $\sqrt{\mathcal N}$ and use two charged implementations of \eqref{eq:pairphase}. Success is $t\pi$, equal to the upper bound.

The continuous certificates give $7.43703431<P_T(0)<7.43703432$ and $-21<P_T'(t)<-18$ on $[0,1/23]$. Integration proves \eqref{eq:familyerror}. The archived family and endpoint checkers verify the endpoint, transition, full slacks and shared final map. 

At seven objects, one optimal family circuit starts with diagonal and off-diagonal amplitudes $0.2707478521$ and $-0.1076667261$. It uses $u_2=-1.3829494983$, $v_2=-0.4522615612$, standard triples $(0,1.0950425397,-10.7514572121)$ and $(-3.1645981226,-5.2298688384,13.9023316964)$, and signs $(-1,+1,+1,+1)$. Each triple lists $(\Omega_{01},\Omega_{02},\Omega_{12})$ in $(z_0,z_1,z_2)$ order. These decimals approximate the exact algebraic circuit, with success approximately $0.7118121160$.

\section{The uniform unrestricted upper certificate}\label{app:uniform}
Unlike the family bounds, this complete-model certificate applies to every algorithm in Section~\ref{sec:model}, with objective $c_U/n$. The uniform upper certificate in the verification archive uses the signed model catalogue of Appendix~\ref{app:symmetry}. Here $k$ counts marked free labels and $L$ is their subset.

\subsection{Shared-subset positivity}
For a degree-at-most-$d$ algebra with anchors $A$, extend a coefficient vector $s$ to a pattern function $\widehat S_s(\alpha,\beta)$ by signed canonicalization, with negative-stabilizer patterns zero. A $k$-flag is a feature with $k$ distinguished free labels modulo permutations of unmarked labels. Anchors and marks are individually fixed, also for the edge-query algebra. Negative unmarked stabilizers give zero rows. For flags $f,g$ with no negative unmarked stabilizer, define
\begin{equation}\label{eq:flagmatrix}
 \mathsf F_k(s)_{f,g}=\sum_{L\subseteq[k]}(-1)^{k-|L|}
        \widehat S_s(f\mathbin{\sqcup_L}g).
\end{equation}
The copies in $f\mathbin{\sqcup_L}g$ share the anchors and precisely the marks in $L$, with disjoint remaining tails and orientation signs retained. For each actual $\mathcal J\subseteq[n]\setminus A$, the signed selector $\mathsf T_{\mathcal J}$ has a zero row unless the feature contains $\mathcal J$. Otherwise it marks those labels in increasing order and inserts the canonicalization sign. For a feature pair only subsets of $I=(\operatorname{supp}\alpha\cap\operatorname{supp}\beta)\setminus A$ contribute. The coefficient identifying precisely $L\subseteq I$ is
\begin{equation}
 \sum_{L\subseteq \mathcal{J}\subseteq I}(-1)^{|\mathcal{J}|-|L|}
 =\begin{cases}1,&L=I,\\0,&L\ne I.
 \end{cases}
\end{equation}
Thus, entry by entry,
\begin{equation}\label{eq:flagidentity}
 S(n)=\sum_{\substack{\mathcal{J}\subseteq[n]\setminus A\\|\mathcal{J}|\le2d}}
                    \mathsf T_\mathcal{J}\mathsf F_{|\mathcal{J}|}(s)\mathsf T_\mathcal{J}^T.
\end{equation}
Each labelled subset contributes one term with coefficient one.
Odd flags contribute zero because their unmarked automorphism acts
on one disjoint tail while preserving the other.

At $n=7$, some auxiliary disjoint-tail patterns in
\eqref{eq:flagmatrix} use eight or nine abstract labels. Every such
term in these maps has a negative stabilizer and is zero. The
nonzero patterns already occur in the seven-label catalogue. Hence
\eqref{eq:flagidentity} applies at $n=7$ as well as at larger sizes.
Positivity of all finite $\mathsf F_k(s)$ implies positivity of the entire
kernel. This is a sufficient condition. Symmetric sums of squares can
likewise have descriptions whose size depends on the degree rather than
the population size~\cite{RaymondEtAl2018}. The flags here are literal
shared-label features, rather than the asymptotic densities of flag
algebras~\cite{Razborov2007}.

\subsection{The rational dual}
Use the coefficient order of the global quadratic catalogue in the signed
model certificate. Set $d_{\rm den}=2\cdot10^{15}$ and $\bar z=v/d_{\rm den}$.
The uniform upper certificate specifies the 34 integer coordinates of $v$.
Let $\bar\gamma=5127403871048098/d_{\rm den}$ and
$\overline{\mathbf h}=(c_U,c_U,\bar\gamma)$, where $c_U$ is the rational constant in
\eqref{eq:constants}. The dual coordinates are
$\theta=c_U/n,\qquad \mathbf{h}=\overline{\mathbf h}/n,\qquad z=\bar z/n.$
Their complete scaled query and output slacks are
\begin{equation}\label{eq:uniformslacks}
 \bar a_0=\overline{\mathbf h}-C_0\bar z,\qquad
 \bar a_e=E_1\overline{\mathbf h}-C_e\bar z,\qquad
 \bar s=E_2\bar z-\bar f,
\end{equation}
where $\bar f=nf(n,1)$ is the exact scaled minimum objective
from \eqref{eq:momentformula}. All entries are rational and
independent of $n$ in these signed coordinates. 

The projector support is checked by $\bar Z\kappa_{abc}=0$.
It suffices to check all rows against one fixed triangle using at
most seven labels, because a quadratic row uses at most four other
labels. Signed covariance then gives every triangle and every size.
Self-adjointness gives support on both sides. The output moment has the same support. The output slack
$(E_2\bar z-\bar f)/n$ is therefore supported, and its positive
semidefiniteness on that support is precisely the condition needed
in weak duality. The scalar dual weight slacks are zero.

\subsection{Full positivity and weak duality}
The flag maps applied to \eqref{eq:uniformslacks} give eleven
rational matrices. The uniform upper certificate specifies every matrix entry and its positive factorization. The exact checker reconstructs each matrix and verifies its rank. The identity-query rows
certify $\overline{\mathbf h}-C_0\bar z$, the edge-query rows certify
$E_1\overline{\mathbf h}-C_e\bar z$, and the output rows certify
$E_2\bar z-\bar f$. Each full matrix has the factorization
\begin{equation}\label{eq:flagfactor}
 \mathsf F=T^TL\diag(d_1,\ldots,d_r)L^TT,
 \qquad d_j>0,
\end{equation}
where $L$ is unit lower triangular and selected columns of $T$
form the identity. The data specify all entries. Exact rational multiplication reproduces each whole matrix. The omitted columns give explicit null vectors,
which annihilate the full matrix.

Equations \eqref{eq:flagidentity} and \eqref{eq:flagfactor} prove full slack positivity for every finite $n\ge7$. Weighted adjoints and projector support preserve the physical dual pairing, so \eqref{eq:gap} bounds success by $c_U/n$. The complete model covers every admitted algorithm.

\section{Certificate data and verification}\label{app:verification}
The verification archive~\cite{WuTwoComparisonArchive} contains the exact certificates and checking programs under the MIT license. These certificates supply the concrete witnesses used in the computer-assisted proofs. The appendices give the derivations and acceptance conditions. The archive's certificate index identifies the exact inputs and checking programs for each result.

The checkers verify the signed model, polynomial roots, positivity, causal equalities, complementarity and the certificate obligations of the all-size results. Exact checks use integer and rational arithmetic, outward rational intervals and the supplied polynomial modules. Floating-point diagnostics are recorded separately. The archive contains file locations, pinned dependencies and execution instructions. A manifest records the byte length and SHA-256 of each file, and the replay checks it before and after execution.

\section{Numerical solutions of the complete model}\label{app:numerics}
Table~\ref{tab:smallnumerics} gives unreduced numerical solutions, with only the triangle kernel removed. The exact values at sizes four through six are not certified here. Table~\ref{tab:numerical} compares selected complete-model estimates with certified family intervals. The difference uses the interval midpoint. The numerical records in the verification archive contain the solver settings, residuals and simulation results. The fit certificates specify the exact finite-node bounds and leave-one-out calculations.

\begin{table}[tb]
\centering\small
\caption{Small-size numerical success}\label{tab:smallnumerics}
\begin{tabular}{rrrrrrr}\toprule
$n$&$r=1$&$r=2$&$r=3$&$r=4$&$r=5$&$r=6$\\\midrule
3&1.000000&1.000000&1.000000&--&--&--\\
4&0.936526&1.000000&1.000000&1.000000&--&--\\
5&0.858008&0.982149&1.000000&1.000000&1.000000&--\\
6&0.783213&0.955213&0.992904&1.000000&1.000000&1.000000\\
\bottomrule\end{tabular}
\end{table}

\begin{table}[tb]
\centering\small
\caption{Numerical minimum-selection values}\label{tab:numerical}
\begin{tabular}{rrrr}\toprule
$n$&$n\Q(n,1)$, numerical&Certified $n\PW(n)$ interval&Difference, numerical\\\midrule
9&5.474051113&$[5.442054895, 5.442054897]$&0.031996217\\
23&6.596560370&$[6.581981578, 6.581981581]$&0.014578791\\
24&6.629640707&$[6.615822905, 6.615822908]$&0.013817800\\
48&7.022259436&$[7.016172415, 7.016172417]$&0.006087020\\
60&7.103536960&$[7.098706813, 7.098706815]$&0.004830146\\
100&7.235541328&$[7.232466973, 7.232466975]$&0.003074355\\
500&7.397181234&$[7.395744435, 7.395744437]$&0.001436798\\
1000&7.417637542&$[7.416365873, 7.416365875]$&0.001271668\\
\bottomrule\end{tabular}
\end{table}

The fits assume an expansion of the unrestricted scaled optimum in powers of $1/n$. The node sets are heuristic. Degrees two and three use $n=100,200,500,1000$, and degree four adds $n=40,48,49,60$. The fitted intercepts are $7.43814878611$, $7.43815014533$ and $7.43815016657$. The cubic interpolates its four nodes, so its residual is not an independent accuracy estimate. For a stored sample $s_n$, exact repaired primal and dual values $L_n,U_n$ give $\delta_n=\max\{|s_n-L_n|,|U_n-s_n|\}$, and exact intercept weights give propagated uncertainty $\sum_n|w_n|\delta_n$. The propagated bounds for degrees two, three and four are $4.458\cdot10^{-5}$, $1.04\cdot10^{-4}$ and $4.735\cdot10^{-5}$, rounded upward. These bounds do not bound the bias from truncation or the numerical flag-optimum error. The conjecture in Section~\ref{sec:uniform} depends on the assumed expansion and uncertified solver accuracy.

Floating-point solutions of the complete model for $9\le n\le60$ and $n=100,200,500,1000$ give a difference $n\Q(n,1)-n\PW(n)$ that decreases from about $0.032$ at $n=9$ to about $0.00127$ at $n=1000$ over the solved sizes. The cubic and quartic intercepts differ from the CVXOPT flag-dual value $7.4381501687$ by about $2.3\cdot10^{-8}$ and $2.2\cdot10^{-9}$, while their certified propagated node budgets are $1.04\cdot10^{-4}$ and $4.735\cdot10^{-5}$. Clarabel returns \texttt{optimal\_inaccurate} with $7.4381501016$, about $6.7\cdot10^{-8}$ below CVXOPT, and a worst block eigenvalue of $-2.1\cdot10^{-8}$. The archived leave-one-out calculations retain the underdetermined cubic cases. Extrapolation from the family data recovers its limit within about $1.7\cdot10^{-9}$.

\section{A finite certificate of strict asymptotic separation}\label{app:strict}
The limit comparison in Theorem~\ref{thm:strict} follows from one
rational feasible point at $n_0=50000$. The strict-separation certificate
in the verification archive specifies its seeds and weights. Its exact
checker uses the signed catalogue and maps of Appendix~\ref{app:symmetry}. If $v_0$ is its
exact objective, direct rational evaluation gives
\begin{equation}\label{eq:strictobjective}
 \frac{743710307}{10^8}<n_0v_0<\frac{743710308}{10^8}.
\end{equation}
Consequently, $n_0v_0-7.43703432>6875/10^8$.
The certificate records the exact rational objective.

Convert the candidate conic multipliers to rationals with forty digits after the leading decimal digit. Impose signed self-adjointness and rational weights $w_0+Nw_e=1$. Keeping $x_e$, set
\begin{equation}\label{eq:strictrepair}
\begin{aligned}
 x_0&=(w_0,w_e,0)-T_1x_e,\\
 y'&=p_o*y*p_o,\\
 g&=p*(T_0x_0+T_ex_e)*p-T_oy',\\
 y''&=y'+E_2g/n_0.
\end{aligned}
\end{equation}
The identity $T_oE_2g=n_0g$ restores the second flow, while the first line restores the first. Since $g$ is supported, output support is preserved. These operations are rational but need not preserve positivity. Mix with weight $10^{-4}$ of a uniform-output point that independently chooses identity weight $1/2$ and each edge weight $1/(2N)$ at both calls. For $w^{\rm u}=(1/2,1/(2N),0)$ its seeds are
\begin{equation}\label{eq:strictinterior}
 x_0^{\rm u}=w^{\rm u}/2,\qquad
 x_e^{\rm u}=E_1w^{\rm u}/(2N),\qquad
 y^{\rm u}=E_2\{p*(T_0x_0^{\rm u}+T_ex_e^{\rm u})*p\}/n_0.
\end{equation}
This point outputs each object with probability $1/n_0$. Mixing preserves the flows and support, and retains \eqref{eq:strictobjective}. Exact fraction-free elimination of the three regular images gives positive nonzero pivots of ranks $3,14,127$ in sizes $3,14,166$, with zero remaining entries on zero-pivot rows. The checker verifies normalized nonnegative weights, all three and 34 flow coordinates, support and objective. By Appendix~\ref{app:symmetry}, coset conjugation gives full positive matrices and the checked adjoints give their sums. Theorem~\ref{thm:model} then supplies the finite realization. The checker imports no optimizer or producer. The strict-point checker also tests deliberately corrupted inputs.

\section{Exact threshold certificates}\label{app:thresholdcert}
The two boundary certificates use the complete signed coefficient model
of Appendix~\ref{app:symmetry}. The threshold certificates in the
verification archive specify a rational primal at $n=46$ and a rational
dual at $n=45$. The primal is recovered and repaired by the method of
Appendix~\ref{app:strict}. A mixture of weight $1/10000$ with
the uniform-output point makes its seeds positive on their supports.
Exact regular-image elimination gives ranks $3,14,127$ for the
identity-query, edge-query and output seeds. Nonnegative query weights,
both full causal equalities and projector support hold exactly.
Its value $v_{46}$ satisfies
\begin{equation}\label{eq:thresholdprimal}
7.00411359<46v_{46}<7.00411360.
\end{equation}
The independent primal checker accepts integer and rational arithmetic only.

The rational dual at $n=45$
uses scaled coordinates $\bar h=45h$ and $\bar z=45z$.
Its slacks are $\bar h-C_0\bar z$, $E_1\bar h-C_e\bar z$ and
$E_2\bar z-45f(45,1)$. The independent checker verifies their exact
regular-image positivity, signed self-adjointness, projector support,
and both scalar inequalities. It computes the causal adjoints from the weighted-transpose definition. It also verifies the exact weak-duality telescoping identity on
a separately formed uniform-output point. The symbolic identity
\eqref{eq:gap} then bounds every complete-model algorithm.
For its upper value $\theta_{45}$,
\begin{equation}\label{eq:thresholddual}
6.99537459<45\theta_{45}<6.99537460.
\end{equation}
These inequalities and monotonicity prove Corollary~\ref{cor:threshold}. Both threshold checkers also test deliberately corrupted certificates.

\end{document}